\newtheorem{theorem}{Theorem}[section]
\newtheorem{deff}[theorem]{Definition}  
\newtheorem{claim}[theorem]{Claim}  
\newtheorem{lem}[theorem]{Lemma}
\newtheorem{corol}[theorem]{Corollary}  
\newtheorem{fact}[theorem]{Fact}
 \newcommand{\qedsymb}{\hfill{\rule{2mm}{2mm}}}  
 \newenvironment{proof}[1][]{\begin{trivlist}  
 \item[\hspace{\labelsep}{\bf\noindent Proof#1:\/}] 
 }{\qedsymb\end{trivlist}}
\newcommand{\ignore}[1]{}
\newcommand{\norm}[1]{{\| #1 \|}}  
\newcommand{\ket}[1]{{ |{#1} \rangle }}  
\newcommand{\braket}[2]{{ \langle {#1} | {#2} \rangle}}
\newcommand{\orderof}[1]{\mathcal{O}(#1)} 
\newcommand{\poly}{\mathrm{poly}} 
\newcommand{\EqDef}{\stackrel{\mathrm{def}}{=}}
\newcommand{\Eq}[1]{Eq.~(\ref{#1})}
\newcommand{\Fig}[1]{Fig.~\ref{#1}}
\newcommand{\Def}[1]{Def.~\ref{#1}}
\newcommand{\Lem}[1]{Lemma~\ref{#1}}
\newcommand{\Sec}[1]{Sec.~\ref{#1}}
\newcommand{\Ref}[1]{Ref.~\onlinecite{#1}}
\newcommand{\Refs}[1]{Refs.~\onlinecite{#1}}
\newcommand{\Thm}[1]{Theorem~\ref{#1}}
\newcommand{\Id}{\mathbbm{1}}
\newcommand{\BB}{\mathbbm{B}}
\newcommand{\BN}{\mathbbm{N}}
\newcommand{\etal}{\textit{et al.}}
\newcommand{\gs}{\Omega}
\newcommand{\gsp}{\Omega^\perp}
\newcommand{\hPi}{\hat{\Pi}}
\newcommand{\hA}{\hat{A}}
\newcommand{\hD}{\hat{D}}
\newcommand{\hDel}{\hat{\Delta}}
\newcommand{\SR}[1]{\mathrm{SR}(#1)}
\newcommand{\Hi}{\mathcal{H}}
\newcommand{\Hip}{\mathcal{H^\perp}}
\newcommand{\Bs}{\bm{s}}
\newcommand{\Pyr}{\mathrm{Pyr}}
\begin{document}


\title{ An improved 1D area law for frustration-free
                     systems}


\author{Itai Arad}
\email{arad.itai@gmail.com}
\affiliation{School of Computer Science and Engineering,\\
 The Hebrew University,
  Jerusalem, 91904, Israel}
\author{Zeph Landau}
  \email{zeph.landau@gmail.com}
\author{Umesh Vazirani}
  \email{vazirani@eecs.berkeley.edu}
\affiliation{University of California at Berkeley\\
    Berkeley, Californina, 94720, USA}


\date{\today}

\begin{abstract}
  We present a new proof for the 1D area law for frustration-free
  systems with a constant gap, which exponentially improves the 
  entropy bound in Hastings’ 1D area law and which is tight to 
  within a polynomial factor.  For particles of dimension $d$, 
  spectral gap $\epsilon>0$, and interaction strength of, at most, $J$,
  our entropy bound is $S_{1D}\le \orderof{1}\cdot X^3\log^8 X$,
  where $X\EqDef(J\log d)/\epsilon$. Our proof is completely
  combinatorial, combining the detectability lemma with basic tools
  from approximation theory. In higher dimensions, when the
  bipartitioning area is $|\partial L|$, we use an additional local
  structure in the proof and show that $S \le
  \orderof{1}\cdot|\partial L|^2\log^6|\partial L|\cdot X^3\log^8X$.
  This is at the cusp of being nontrivial in the 2D case, in the
  sense that any further improvement would yield a subvolume law.

\end{abstract}

\pacs{}

\maketitle

\section{Introduction}
\label{sec:Intro}

One of the striking differences between quantum and classical
systems is the number of parameters that are needed to describe
them. A classical system of $n$ particles can be generally described
by $\orderof{n}$ parameters, whereas an \emph{arbitrary} state of a
similar quantum system would generally require $2^{\orderof{n}}$
parameters. This exponential gap is directly related to the
phenomena of entanglement; quantum states do not have to be simple
product states but can be an arbitrary superposition of such
states.

But how genuine is this exponential gap? Is it an artifact of the
fact that we are considering \emph{arbitrary} quantum states, or is
it an inherent characteristic of \emph{physical} states that occur
in nature, which are, of course, a much more restricted set of
states? Among the best physical systems one may consider with
respect to this question are quantum many body systems on a lattice,
which are ubiquitous in condensed matter physics. These systems are
generally described by a \emph{local Hamiltonian} that models the
local interaction between neighboring particles. In particular, one
is interested in the entanglement properties of the \emph{ground
state} of these systems. If the system has a spectral gap
$\epsilon>0$, we can approach this state by cooling the system to
temperatures below that gap. Such cold systems would be in an almost
perfectly coherent state, and one expects their quantum nature to be
fleshed out most pronouncedly.  What are then the entanglement
properties of such states?

Area laws constitute one of the most important tools for bounding
entanglement in such systems.\cite{ref:AL-rev} Starting from
Bekenstein's seminal result that the entropy of a black hole is
proportional to the surface of its horizon, \cite{ref:black-hole73}
it was later conjectured that the origin of this entropy is the
quantum entanglement between the inner part of the black hole and
its surrounding.\cite{ref:Bombelli86, ref:Srednicki93} This
conjecture has led researchers to consider the scaling of
entanglement entropy in the ground (vacuum) states in models of
quantum field theories,\cite{ref:Kabat94, ref:Holzhey94} where it
was demonstrated that also in there, entropy scales like the
\emph{surface area} of the region rather than its volume, albeit
with some logarithmic corrections in critical cases. The same
behavior was then demonstrated also in the context of spin chains by
Vidal \etal,\cite{ref:AL-Vidal03} which has led to the formulation
of the area-law conjecture that we now describe.

Consider a local-Hamiltonian system of $d$-dimensional
particles that sit on a $D$-dimensional grid, as demonstrated in
\Fig{fig:commuting-area-law}.  We say that a state $\ket{\psi}$ of
this system obeys an area law if, for any contiguous region $L$ of
the grid, the entanglement between the particles inside $L$ and
the particles outside $L$ is upper bounded by a constant times the
surface area of $L$ as follows: 
\begin{align}
 S_L(\psi) \le \orderof{|\partial L|} \ .
\end{align}
The area-law conjecture then states that if the system has a
constant spectral gap $\epsilon>0$, its \emph{ground state}
necessarily obeys an area law. This is clearly much stronger than
the trivial bound on the entropy (known as a \emph{volume law}),
which is proportional to the number of particles 
inside $L$. 
\begin{figure}
 \begin{center}
   \includegraphics[scale=1]{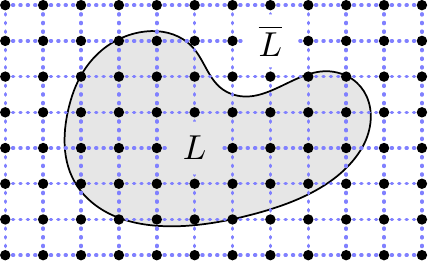}
 \end{center}
 \caption{
 An illustration of a quantum many-body system on a grid. The
 particles sit on the vertices and the edges denote 2-body
 interactions. When the system is described by a state that obeys
 an area law, the entanglement entropy between the particles inside
 the region $L$ and the particles outside of it is proportional to
 the surface area $|\partial L|$.
 \label{fig:commuting-area-law} }
\end{figure}

On the face of it, there are several reasons to believe that ground
states of gapped local Hamiltonians obey an area law. One might
intuitively expect that the entanglement is local because it is
generated by local interactions. More formally, it has been shown by
Hastings that in the ground state of gapped systems, the correlation
between two local observables decays exponentially in their lattice
distance.\cite{ref:Has04} It is, therefore, tempting to conclude
that only the degrees of freedom near the boundary are entangled to
those outside the region. However, the existence of data hiding
states \cite{ref:data-hiding} implies that such simple reasoning is
probably insufficient to prove an area law.\cite{ref:Has07b}

Indeed, even though the area-law conjecture was shown to be true in
many specific models, mostly in 1D (see, for example,
\Refs{ref:AL-Vidal03, ref:Kor1, ref:Kor2, ref:AL-rev} and references
therein), it was not until a few years ago, in a seminal
paper,\cite{ref:Has07} that Hastings proved that it holds for
\emph{all} 1D systems with a spectral gap. In this case, an area law
says that ground-state entropy across any cut in the 1D chain is
bounded by a constant, independent of the system size. From this,
one can deduce that, for any practical purposes, the ground state of
such systems can be described by a \emph{polynomial} number of
parameters (say, by a Matrix Product State), instead of an
exponential number (see \Ref{ref:Has07}).

Unfortunately, Hastings' 1D proof does not scale up to higher
dimensions. One reason lies in the form of the entropy bound. For a
1D system with a bounded interaction strength $J$, a spectral gap
$\epsilon>0$, and a particle dimension $d$, the upperbound is
$2^{\orderof{X}}$, where $X\EqDef \frac{J\log d}{\epsilon}$. This
exponential dependence on $\log d$ is catastrophic if we want to use
this formula in higher dimensions by fusing together particles along
surfaces parallel to $\partial L$. It implies an entropy bound of
$2^{\orderof{|\partial L|}}$ instead of $\orderof{|\partial L|}$.
This is exponentially larger than the trivial volume-law bound of
$\orderof{|L|}$. Proving area laws in two or higher dimensions
therefore is still a wide open problem and a holy grail in quantum
Hamiltonian complexity.

A combinatorial approach to proving the area law for 1D frustration
free systems was introduced in \Ref{ref:DL2}. The proof replaced
Hastings' analytical machinery including the Lieb-Robinson bound and
spectral Fourier analysis with the detectability lemma,\cite{ref:DL}
a combinatorial lemma about local Hamiltonians. However, the
resulting bound was no better than Hastings' bound because, at its
heart, the argument followed the same outline as Hastings',
including the use of a ``mutual information saturation''-type
argument (or, in effect, a kind of ``monogamy of entanglement''-type
argument) that leads to an exponential slack. Nevertheless, the
combinatorial nature of the detectability lemma opened up the
possibility of a new inherently combinatorial proof of the area law.

\subsection{Our results}

In this paper we give a new proof of the area law in 1D for the case
of gapped frustration-free Hamiltonians. Our proof yields an
exponentially better bound on the entanglement entropy than the
bound in \Ref{ref:Has07}. Specifically, we prove
\begin{theorem}
 \label{thm:main}
 Consider a 1D, frustration-free local Hamiltonian system
 $H=\sum_i H_i$, over a system of $n$ particles of local
 dimension $d$, with $H_i$ being 2-local nearest-neighbor
 interactions with a bounded strength $\norm{H_i}\le J$.
 Assume further that the system has a unique ground
 state $\ket{\gs}$ and a spectral gap $\epsilon>0$, and define
 \begin{align}
 \label{def:X}
   X \EqDef \frac{J\log d}{\epsilon} \ .
 \end{align}
 Then along any cut in the chain, the von-Neumann entanglement
 entropy between the two sides is bounded by
 \begin{align}
   \label{eq:S}
   S_{1D} \le \orderof{1}\cdot X^3\log^8X \ .
 \end{align}
\end{theorem}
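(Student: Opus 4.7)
My plan is to use the approximate ground-state projector (AGSP) framework, constructing a sharp AGSP for a 1D frustration-free chain by combining the detectability lemma with Chebyshev polynomial amplification, and then converting this AGSP into an entropy bound. The central object is an operator $K$ that preserves $\ket{\gs}$, contracts $\gsp$ by a factor $\Delta<1$, and has Schmidt rank at most $D$ across the given cut; the whole game is to trade $D$ and $\Delta$ against each other.

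First I would establish a bootstrap lemma of the form: if such a $(D,\Delta)$-AGSP exists with $D\cdot\Delta\le 1/2$, then the ground-state entropy across the cut is at most $\orderof{\log D}$. The argument starts from a product state $\ket{\phi}$ with non-trivial overlap with $\ket{\gs}$, which exists by frustration-freeness since a generic product state in the support of the local projectors has inverse-polynomial overlap with $\ket{\gs}$. Iterating $K$ produces a sequence of vectors with Schmidt rank at most $D^{t}$ and overlap with $\ket{\gs}$ approaching $1$ geometrically. Comparing $\ket{\gs}$ to its best rank-$D^{t}$ truncations then forces the Schmidt spectrum of $\ket{\gs}$ to decay geometrically past some head of length $\orderof{D}$, yielding the entropy bound.

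Second, I would construct the AGSP itself from two ingredients: (a) the detectability lemma operator $A$, a product of two commuting layers of local projectors $\Id-\hPi_i$ along the chain, which satisfies $A\ket{\gs}=\ket{\gs}$, $\norm{A|_{\gsp}}^2\le 1-\Omega(\epsilon)$, and has constant Schmidt rank across any single cut because only the projectors straddling the cut can entangle through it; and (b) a Chebyshev polynomial $T_\ell$ of degree $\ell$, applied to an appropriate rescaling of $A^\dagger A$. Standard Chebyshev amplification then gives shrinkage $\Delta\le 2e^{-2\ell\sqrt{\epsilon}}$ on $\gsp$, while the Schmidt rank of $K=T_\ell(\cdot)$ across the cut grows only polynomially in $\ell$ after careful accounting. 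Balancing $\ell$ so that $D\cdot\Delta\le 1/2$ and substituting into the bootstrap yields, after polylogarithmic tightening of the Chebyshev rescaling, the claimed bound $\orderof{1}\cdot X^3\log^8 X$.

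The main obstacle is the Schmidt-rank accounting for $T_\ell(A^\dagger A)$ across the cut. A naive bound of the form $d^{\orderof{\ell}}$ would reproduce Hastings' exponential dependence on $\log d$ and be worse than a volume law. The real combinatorial work is to show that when the Chebyshev polynomial is expanded into monomials in $A$, only a polynomial-in-$\ell$ number of local projector applications are truly cut-crossing, each contributing a bounded factor to the Schmidt rank, exploiting that $A$ is a product of local projectors and most of them commute through the cut trivially. Squeezing the dependence down from exponential to polynomial in $\log d$ through this refined counting is where the exponential improvement over Hastings' bound is realized, and where the bulk of the technical effort in the proof will lie.
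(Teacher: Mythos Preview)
Your high-level framework matches the paper: construct a $(D,\Delta)$-AGSP with $D\cdot\Delta\le 1/2$ and deduce $S\le\orderof{\log D}$. But the AGSP you propose has a genuine gap, and it is exactly the one you flag as the ``main obstacle'' without resolving.

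You suggest taking $K=T_\ell$ of a rescaling of $A^\dagger A$, where $A=\Pi_{even}\Pi_{odd}$ is the detectability-lemma operator. The difficulty is that every single factor of $A$ contains the one projector that straddles the cut, so $A^j$ increases the Schmidt rank by $d^{2j}$; the monomials in $T_\ell(A^\dagger A)$ therefore contribute SR of order $d^{\orderof{\ell}}$, and there is no way to ``commute most projectors through the cut'' because the cut-crossing projector sits in the same place in every layer. You then have $D\sim d^{\orderof{\ell}}$ against $\Delta\sim e^{-\orderof{\ell\sqrt{\epsilon}}}$, and $D\cdot\Delta\le 1/2$ forces $\log d\lesssim\sqrt{\epsilon}$, which is not the theorem. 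The refined counting you hope for simply does not exist in this construction: all the entangling happens at one column.

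The paper's fix is to apply the Chebyshev polynomial not to $A$ but to a \emph{local} operator inside a window of $m$ even projectors around the cut. Let $\BN=\sum_{i=1}^m(\Id-P_i)$ count violations in that window; its spectrum is $\{0,1,\ldots,m\}$, so a rescaled Chebyshev polynomial $C_m$ of degree only $\sqrt{m}$ sends $0\mapsto 1$ and shrinks $\{1,\ldots,m\}$ by a constant. One replaces $\Pi_m$ by $\hPi_m=[C_m(\BN)]^q$ and sets $\hA=\hPi_m\cdot\Pi_{rest}\cdot\Pi_{odd}$. Now the $j=q\sqrt{m}$ cut-crossings per layer are spread across $m$ distinct columns, and a recursive $\BN=\BN_L+\BN_R$ splitting (a pigeonhole/binary-search argument) shows that $\hA^\ell$ decomposes into a controlled number of terms each of which crosses \emph{some} column only $\orderof{j\ell/m}$ times; one then pays a one-time factor $D_I=D_0^m$ to transfer back to the middle cut. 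A separate coarse-graining step (fusing $k$ particles so that $(D_0,\Delta_0)\mapsto(D_0^k,\Delta_0^k)$), which your proposal omits entirely, is also essential to balance the parameters. Your plan is missing both ingredients: the Chebyshev polynomial must act on the window counting operator $\BN$, not on $A$, and coarse-graining is needed to make $\hD\cdot\hDel<1/2$ achievable.

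A smaller issue: your bootstrap assumes a product state with ``inverse-polynomial overlap'' exists by frustration-freeness. The paper does not rely on this; instead it proves (via an iterative Schmidt-vector replacement, using $D\cdot\Delta\le 1/2$) that a product state with overlap $\ge 1/\sqrt{2D}$ must exist. This is what makes the final entropy bound $n$-independent.
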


The importance of this work is threefold. First, in 1D, we
exponentially improve the upper bound as a function of
$\epsilon^{-1}$: the bound in \Eq{eq:S} implies an upper bound of
$\orderof{\epsilon^{-3}\log^8{\epsilon^{-1}}}$. This is within a
polynomial factor of recent lower bounds that were found by Hastings
and Gottesman \cite{ref:lowS-2} and Irani.\cite{ref:lowS-1}\ \
\footnote{The examples in these papers are of a frustrated
Hamiltonians, whereas our results apply for frustration-free
systems. However, the construction in \Ref{ref:lowS-1} is
translation-invariant. It can be easily modified to be made
frustration-free \cite{ref:Sandy-comment} if the
translation-invariance requirement is dropped, yielding a system in
which the entanglement entropy $S$ is lower-bounded by
$\epsilon^{1/6}$, where $\epsilon>0$ is the spectral gap of the
system.} The former showed a 1D system with a fixed $d$ and $J$, in
which the entropy along a cut is at least
$\orderof{\epsilon^{-1/4}}$, and the latter showed a lower bound of
$\orderof{\epsilon^{-1/12}}$ for a translation invariant system.

Second, and more important, is the relevance of this work for
proving area laws in higher dimensions. A naive application of our
bound to higher dimensional systems would yield an entropy bound of
$S\le |\partial L|^3\poly(\log|\partial L|)$, which is still worse
than a volume law, but is much better than the previous exponential
bound. Moreover, as we show in \Sec{sec:2D}, in two or more
dimensions, one can further exploit the local properties of the
system \emph{along the boundary $\partial L$} and improve the bound
to $|\partial L|^2\cdot\poly(\log|\partial L|)$. This bound is at
the cusp of being nontrivial; any further improvement that would
bound the entropy by $|\partial L|^{2-\delta}$ for any $\delta>0$,
would prove a subvolume law for 2D. In fact, this gives a
subvolume law for the case of fractals with dimension $1<D<2$, but
we shall not pursue that direction here.

Finally, our approach is differs from Hastings' original proof.
Here, the ``monogamy of entanglement''-type argument is replaced with an
iterative procedure to find product states with increasingly higher
overlap with the ground state. The central quantity used to analyze
the progress of this procedure is Schmidt rank, rather than more
advanced tools like relative entropy and its monotonicity. While the
iterative procedure is based on the detectability lemma, a much
more intricate combinatorial structure is necessary to ensure that
the procedure makes progress.

We now give a high-level overview of the proof.

\subsection{High-level overview of the proof}

Consider a 1D chain of $n$ $d$-dimensional particles with nearest
neighbor interactions, described by the gapped, frustration-free
Hamiltonian $H = \sum_i Q_i$ with a unique ground state $\ket{\gs}$.
For the sake of simplicity, we assume that $Q_i$ are projections
and, therefore, $P_i\EqDef \Id-Q_i$ are projections to the local
ground spaces of the different terms.

The key to proving an area law across a cut, is to find a product
state $\ket{\phi}=\ket{\phi_L}\otimes\ket{\phi_{\overline{L}}}$ with
respect to the bipartitioning of the system, which has a large overlap
with $\ket{\gs}$. Our approach to finding such a product state is to
start with any product state with a nonzero overlap with $\ket{\gs}$
and act on it with an operator that increases its overlap with
$\ket{\gs}$, without increasing its Schmidt rank much. Specifically,
we construct an operator $K$ with the following property: $K$ fixes
$\ket{\gs}$, but when applied to any state $\ket{\psi}$, it shrinks
the component orthogonal to $\ket{\gs}$ by a factor of $\Delta$ while
increasing the Schmidt rank of $\ket{\psi}$ by, at most, a factor of
$D$. Clearly, there is a race between these two factors $D$ and
$\Delta$. It turns out that when $D\cdot\Delta <1/2$, we can amplify
the overlap with $\ket{\gs}$ by replacing
$\ket{\phi}=\ket{\phi_L}\otimes\ket{\phi_{\overline{L}}}$ with one of
the Schmidt vectors of $K\ket{\phi}$. This amplification continues all
the way until the overlap is $\sqrt{1/(2D)}$. A few more applications
of $K$ to this product state yield a state with Schmidt rank
$D^{\orderof{1}}$, which has constant ($D$ independent) overlap with
$\ket{\gs}$. Further applications of $K$ give rise to Schmidt
coefficients with vanishing mass, and, therefore, the entanglement
entropy of $\ket{\gs}$ can be bounded by $\orderof{\log D}$.

The task of proving an area law therefore is reduced to the task of
finding an operator $K$ with $D\cdot\Delta<1/2$. Our starting point
is the detectability lemma (DL). Denote the spectral gap by
$\epsilon>0$. We can partition the projections $\{P_i\}$ into two
subsets of even and odd projections, which are called ``layers''
(see \Fig{fig:2layers}). Inside each layer, the projections commute
because they are nonintersecting. Consequently, $\Pi_{odd}\EqDef
P_1\cdot P_3\cdot P_5\cdots$ and $\Pi_{even}\EqDef P_2\cdot P_4\cdot
P_6\cdots$ are the projections into the common ground spaces of the
odd and even layers, respectively. Then, by the DL, the operator
$A\EqDef \Pi_{even}\Pi_{odd}$ is an \emph{approximation} to the
ground-state projection. It preserves the ground state while
shrinking its perpendicular subspace by an $n$-independent factor
$\Delta_0(\epsilon)\simeq 1-c\epsilon$, where $c$ is a geometrical
factor. Moreover, each application of $A$ increases the Schmidt rank
of our state by a constant factor of, at most, $D_0\EqDef d^2$ (due to
the projection that intersects with the cut in the chain; see
\Fig{fig:2layers}). Unfortunately, we generally expect $D_0
\cdot\Delta_0 > 1$, so the operator $A$ does not, by itself, suffice
to carry out our plan.

To construct the operator $K$ we need several new ideas. First, we
observe that $D_0$ and $\Delta_0$ can be replaced by $D_0^k$ and
$\Delta_0^k$, respectively, by coarse graining: Fuse $k$ adjacent
particles, making them a single particle of dimension $d^k$.
Although this only increases the value of the product, it creates
room for the next step, which is to modify the operator $A$ to
decrease the factor by which it blows up the Schmidt rank. For
concreteness, assume that the even layer contains the projection
that intersects with the cut. We will focus on a segment of $m$
projections around the cut and denote their product by $\Pi_m$, so
$\Pi_{even} = \Pi_m \Pi_{rest}$. We will replace the operator $\Pi_{even}$
with $\hPi_m\Pi_{rest}$ that closely approximates $\Pi_{even}$
while increasing the Schmidt rank by much less than $D_0^k$
(when amortized over several applications).

One of the great benefits of using the DL is that the all
projections in a given layer commute, and, hence, much of the
following analysis becomes almost classical.  Indeed, the $m$
projections around the cut $\{P_i\}_{i=1}^m$ define a decomposition of the
Hilbert space of the system into a direct sum of $2^m$ eigenspaces,
called sectors. Each sector is defined by a string $\Bs=(s_1,
\ldots, s_m)$, such that if $\ket{\psi_{\Bs}}$ is in the $\Bs$ sector,
$P_i\ket{\psi_{\Bs}} = (1-s_i)\ket{\psi_{\Bs}}$. A site with $s_i=1$ is
called \emph{a violation}, since it corresponds to a non-zero energy
of the corresponding local Hamiltonian term, and $\sum_{i=1}^m s_i$
is the total number of violations in the sector $\Bs$.

Now, an arbitrary state $\ket{\psi}$ can be decomposed as
$\ket{\psi}=\ket{\psi_0} + \ket{\psi_1}$, where $\ket{\psi_0}$ is
its projection on the zero violations sector and $\ket{\psi_1}$ is
its projection on the violating sectors. Clearly $\Pi_m\ket{\psi} =
\ket{\psi_0}$. To
approximate this behavior, we will use the $\{P_i\}$ projections to
construct an operator $\hPi_m$ that is diagonal in the sectors
decomposition, and in addition $\hPi_m\ket{\psi} = \ket{\psi_0} +
\ket{\psi'_1}$, with $\ket{\psi'_1}$ in the violating sectors and
$\norm{\psi_1'}^2 \le \delta\norm{\psi_1}^2$. It follows that the
operator $\hA\EqDef\hPi_m\Pi_{rest}\Pi_{odd}$ approximates
$\Pi_{even} \Pi_{odd}$ in the sense that $\hA\ket{\gs}=\ket{\gs}$,
$\hA\ket{\gsp}\in\Hip$, and
$\norm{\hA\ket{\gsp}}^2\le(\Delta_0^k+\delta)\norm{\gsp}^2$. Let
$\hDel \EqDef \Delta_0^k + \delta$.

To construct the operator $\hPi_m$, first consider the operator $\BN
= \sum_{i=1}^m (\Id-P_i)$. The operator $\BN$ counts the number of
violations in a sector: If $\ket{\psi_{\Bs}}$ belongs to the $\Bs$
sector, $\BN\ket{\psi_{\Bs}} = |\Bs|\cdot\ket{\psi_{\Bs}}$. The operator
$\hPi_m$ will be a polynomial in $\BN$, with the polynomial
evaluating to $1$ on $|s|=0$, and less than $\delta$ on input with
$|s|$ between $1$ and $m$. Three ideas play a critical role in the
construction of this polynomial and in bounding the increase in
Schmidt rank. The first is the use of a Chebyshev polynomial, which
achieves the desired behavior at the $m+1$ points with a degree of
only $j = \orderof{\sqrt{m}\log\delta^{-1}}$. The second idea is
that it suffices to bound the entanglement across any of the $m$
cuts and then to pay a further penalty of, at most, $D_I\EqDef
(D_0^k)^m$ to bound the entanglement across the cut of interest. So,
if we consider the operator $\hA^\ell$, each term has degree $j\ell$
(i.e., is a product of $j\ell$ of the $P_i$s), and so the typical cut
is crossed $j\ell/m$ times, resulting in a Schmidt rank increase by
$(D_0^k)^{j\ell/m} \simeq D_0^{k\ell/\sqrt{m}}$. This means that the
incremental Schmidt rank per application of a term of $\hA$ is
$D_0^{k/\sqrt{m}}$, which can be made arbitrarily small by choosing
$m$ to be large enough. Finally, a recursive grouping argument shows that
we do not have to pay a price in Schmidt rank proportional to the
number of terms in the polynomial (which would have been
catastrophic); instead, we can decompose the operator $\hA^\ell$ as 
a sum of only $2^{\orderof{\log^2 j}}$ operators, for each of which
there is a (possibly different) cut with entanglement increase of
$\simeq D_0^{k\ell/\sqrt{m}}$. 

Putting it all together, we have an operator $K = \hA^\ell$, which 
increases the Schmidt rank by $D = D_I \hD^\ell$, where $\hD =
2^{\orderof{\log^2 j}}D_0^{k/\sqrt{m}}$, and
$j=\orderof{\sqrt{m}\log \delta^{-1}}$ and achieves a shrinkage
factor of $\Delta = \hDel^\ell$ for $\hDel = \Delta_0^k +
\delta$. It is now a matter of simple algebra to fix the parameters
$m,\delta, k$ and $\ell$ such that $D\cdot\Delta<1/2$. The end results
turns out to be $\log D = \orderof{1}\cdot X^3\cdot\log^8X$ for
$X=(\log D_0)/\epsilon$, and this completes the proof.

{~}

\noindent\textbf{Paper Organization:}\\

We begin with some preliminary definitions and known results from
mathematics and quantum information in \Sec{sec:Notation}. In
\Sec{sec:proof} we give the proof of our main result,
\Thm{thm:main}. The heart of the proof, which is its most technical
part, is the diluting lemma. It is proved separately in
\Sec{sec:diluting}. In \Sec{sec:2D} we provide an outline for our
entanglement bound in 2D and beyond (the ``almost volume law''
result), and in \Sec{sec:MPS} we sketch the implication of our area
law to the existence of a matrix product state approximation for the
ground state. In \Sec{sec:Conclusions} we offer our summary and
conclusions.

\section{Notation and Preliminaries}
\label{sec:Notation}

Throughout this paper $\log(\cdot)$ will denote the base 2
logarithmic function.

\subsection{Local Hamiltonians}

We consider a $k$-local Hamiltonian $H$ acting on
$\Hi=(\mathbbm{C}^d)^{\otimes n}$, the Hilbert space of $n$
particles (spins, qudits) of dimension $d$ that sit on a
$D$-dimensional grid. Our main result concerns the 1D case with
$D=1$, but in \Sec{sec:2D} we will consider higher dimensions. We
assume $H=\sum_i H_i$ where each $H_i$ is a non-negative bounded
operator that acts non-trivially on a constant number of $k$
particles (hence the term \emph{local Hamiltonian}). We further assume that
$H$ has a unique ground state $\ket{\gs}$ with ground energy $0$ and
has constant spectral gap $\epsilon>0$. Since the $H_i$ are all
non-negative, the ground state must be a common zero eigenstate for
each of the $H_i$; a Hamiltonian with this feature is known as
\emph{frustration free}. We denote by $\Hip \subset \Hi$ the
orthogonal complement of the ground space of $H$. Thus, $\Hip$ is an
invariant subspace for $H$ and 
\begin{align} 
\label{eq:gapcondition}
  H|_{\Hip} \geq \epsilon \Id \ . 
\end{align}

We further assume that the $H_i$ terms are projections, and,
henceforth, we will denote them by $Q_i$ to remind the reader. We
define $P_i$ to be the projection on the ground space of $Q_i$,
$P_i\EqDef\Id-Q_i$.  The assumption that $H$ is made of projections
is not actually a restriction, as is demonstrated, for example, in
Sec.~2 of \Ref{ref:DL2}. Indeed, given a general frustration-free
system with a spectral gap $\epsilon>0$ and interaction strength
bounded by $\norm{H_i}\le J$, one can always pass to an equivalent
system that shares the same ground space, which is made of
projections and has a spectral gap $\epsilon/J$. Therefore,
throughout the paper, we will drop the $J$ dependence; we will
assume $J=1$ and work in dimensionless units.

\subsection{The Schmidt decomposition}

Given a state $\ket{\phi}$ and a bipartition of the system to two 
non intersecting sets, $R$ and $L$, with corresponding Hilbert
spaces $\Hi_L, \Hi_R$ such that $\Hi=\Hi_L\otimes\Hi_R$, we can
consider the \emph{Schmidt decomposition} of the state along this
cut: $\ket{\phi} = \sum_j \lambda_j \ket{L_j}\otimes\ket{R_j}$. Here
$\lambda_1\ge \lambda_2\ge \ldots$ are the \emph{Schmidt
coefficients}. Their squares sum 1 (if $\ket{\phi}$ is normalized)
and are equal to the non-zero eigenvalues of the reduced density
matrices of either sides of the cut. 

The number of nonzero Schmidt coefficients in the Schmidt
decomposition of $\ket{\phi}$ is called the \emph{Schmidt rank} (SR),
which we shall denote as $\SR{\phi}$.  The usefulness of the SR stems
from it being a ``worst case'' estimate for the entanglement. As
such, it is often easy to bound.  The following facts are easy to
verify:
\begin{fact} \ 
\label{f:SR}
  \begin{enumerate}
    \item $\SR{\phi + \psi} \le \SR{\phi} + \SR{\psi}$.
    
    \item If $O$ is a $k$-local operator  whose support intersects both
      $\Hi_L$ and $\Hi_R$, then it can increase the SR (with respect
      to the bi-partitioning) of any state by, at most, a factor of
      $d^k$: $\SR{O\phi} \le d^k\SR{\phi}$. If $O$ intersects only
      one part of the system, its action cannot increase the SR.
      
    \item Consider a 1D system. If $r_i$ and $r_j$ are the SR of 
      $\ket{\phi}$ that correspond to cuts between particles $i,
      i+1$, and $j,j+1$, then $d^{-|i-j|}r_j\le r_i \le d^{|i-j|} r_j$.
  \end{enumerate}
\end{fact}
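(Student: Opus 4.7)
The plan is to handle each item separately by reducing to the Schmidt decomposition and a direct counting of product terms. All three rely on the basic principle that $\SR{\phi}$ equals the minimum number of product terms of the form $\ket{L}\otimes\ket{R}$ needed to express $\ket{\phi}$ as a linear combination, so any such expression furnishes an upper bound.

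For item 1, I would take Schmidt decompositions $\ket{\phi}=\sum_{i=1}^{r_1}\lambda_i\ket{L_i}\otimes\ket{R_i}$ and $\ket{\psi}=\sum_{j=1}^{r_2}\mu_j\ket{L'_j}\otimes\ket{R'_j}$ with $r_1=\SR{\phi}$ and $r_2=\SR{\psi}$, and simply concatenate them. The resulting expression for $\ket{\phi}+\ket{\psi}$ uses $r_1+r_2$ product terms, hence $\SR{\phi+\psi}\le r_1+r_2$.

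For item 2, the key tool is the operator Schmidt decomposition. Let the support of $O$ consist of $k_L$ sites in $L$ and $k_R$ sites in $R$ with $k_L+k_R\le k$. Then $O$ can be written as $O=\sum_{\alpha=1}^N A_\alpha\otimes B_\alpha$, where $A_\alpha$ acts on the $L$-part of the support and $B_\alpha$ on the $R$-part, and $N\le d^{2\min(k_L,k_R)}\le d^k$, using that $2\min(k_L,k_R)\le k_L+k_R\le k$. Each tensor-product term $A_\alpha\otimes B_\alpha$ acts locally on each side of the cut, so applying it to a Schmidt decomposition of $\ket{\phi}$ preserves the number of product terms. Summing over $\alpha$ produces at most $N\cdot\SR{\phi}\le d^k\SR{\phi}$ product terms in an expansion of $O\ket{\phi}$, which gives the SR bound. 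When $O$ acts only on one side of the cut, it can be absorbed into the local Schmidt vectors on that side, leaving the SR unchanged.

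For item 3, I would shift the cut one particle at a time. Starting from the Schmidt decomposition at the cut between sites $i$ and $i+1$, I would expand each right Schmidt vector in an orthonormal basis of the local $d$-dimensional space of particle $i+1$, splitting each vector into at most $d$ orthogonal pieces, and then regroup particle $i+1$ with the left side. This produces an expansion of $\ket{\phi}$ as a sum of at most $d\cdot r_i$ product terms with respect to the shifted cut, so $r_{i+1}\le d\cdot r_i$. Running the same argument in reverse gives $r_i\le d\cdot r_{i+1}$, and iterating $|i-j|$ times yields the claimed two-sided bound. None of these three items presents a real obstacle; the only point worth attention is the counting in item 2, where one must remember that it is $\min(k_L,k_R)$ rather than $k$ itself that controls the number of operator Schmidt components, though the looser bound $d^k$ is all that is actually needed.
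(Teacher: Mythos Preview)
Your proposal is correct in all three parts. The paper does not actually provide a proof of this fact; it simply states that ``the following facts are easy to verify'' and leaves the verification to the reader, so there is no paper-side argument to compare against. Your sketch is exactly the kind of routine expansion the authors have in mind, and the one refinement you note in item~2 --- that the operator Schmidt rank is bounded by $d^{2\min(k_L,k_R)}\le d^{k_L+k_R}\le d^k$ --- is correct and slightly sharper than the stated bound requires.
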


An important fact about the SR is the following corollary of the
Eckart-Young theorem \cite{ref:Young-Eckart36}, which states that
the truncated Schmidt decomposition provides the best approximation
to a vector in the following sense:
\begin{fact}  
\label{f:rankapprox} 
  Let $\ket{\psi}$ be a vector on a bi-partitioned Hilbert space
  $\Hi_L \otimes \Hi_R$, and let $\lambda_1\ge \lambda_2\ge\ldots$
  be its corresponding Schmidt coefficients.  Then the largest inner
  product between $\ket{\psi}$ and a normalized vector with Schmidt
  rank $r$ is $\sqrt{\sum_{j=1}^r\lambda^2_j}$.
\end{fact}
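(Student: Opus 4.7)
The plan is to reduce the claim to the Eckart--Young best rank-$r$ approximation theorem for matrices, which the fact itself hints at. First I would fix orthonormal bases $\{\ket{i}_L\}$ of $\Hi_L$ and $\{\ket{j}_R\}$ of $\Hi_R$ and, to each vector $\ket{\chi}\in\Hi_L\otimes\Hi_R$, associate the matrix $C_\chi$ whose entries are the coefficients of $\ket{\chi}$ in the product basis. Under this identification the Schmidt decomposition is exactly the singular value decomposition, so the Schmidt coefficients of $\ket{\chi}$ are the singular values of $C_\chi$, $\SR{\chi}$ equals the matrix rank of $C_\chi$, and inner products become Hilbert--Schmidt inner products, $\braket{\phi}{\chi}=\Tr(C_\phi^\dagger C_\chi)$.

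Next I would exhibit the bound as attained. The renormalized truncated Schmidt state
\begin{align*}
\ket{\phi^\star}\EqDef\frac{1}{\sqrt{\sum_{j=1}^r\lambda_j^2}}\sum_{j=1}^r \lambda_j \ket{L_j}\otimes\ket{R_j}
\end{align*}
is a normalized vector of Schmidt rank at most $r$, and a direct calculation gives $\braket{\phi^\star}{\psi}=\sqrt{\sum_{j=1}^r\lambda_j^2}$. It then remains to show that no normalized $\ket{\phi}$ with $\SR{\phi}\le r$ can do better.

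For that upper bound, let $P$ denote the orthogonal projection in $\Hi_L$ onto the column space of $C_\phi$; it has rank at most $r$ and satisfies $C_\phi=PC_\phi$. Cauchy--Schwarz for the Hilbert--Schmidt inner product then yields
\begin{align*}
|\braket{\phi}{\psi}|=|\Tr(C_\phi^\dagger P C_\psi)|\le \|C_\phi\|_F\cdot\|P C_\psi\|_F=\|P C_\psi\|_F,
\end{align*}
reducing the problem to maximizing $\|P C_\psi\|_F^2=\Tr\bigl(P\,C_\psi C_\psi^\dagger\bigr)$ over rank-$r$ orthogonal projections $P$ in $\Hi_L$. By Ky Fan's maximum principle (the eigenvalue formulation of Eckart--Young), this maximum equals the sum of the top $r$ eigenvalues of $C_\psi C_\psi^\dagger$, namely $\sum_{j=1}^r\lambda_j^2$. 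Taking square roots gives the desired inequality. There is no real obstacle here: the proof is a routine packaging of the cited classical theorem, and the only care needed is in cleanly translating between the abstract bipartite-state picture and its matrix avatar so that the variational characterization of partial eigenvalue sums can be invoked.
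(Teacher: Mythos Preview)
Your proof is correct. The paper does not actually prove this fact: it simply states it as a known corollary of the Eckart--Young theorem and moves on, so your argument---translating to matrices and invoking the Ky Fan/Eckart--Young variational characterization of the top singular values---is precisely the routine unpacking the paper leaves to the reader.
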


\subsection{The Detectability Lemma}
\label{sec:DL}
  
One of our main technical tools in this paper is the Detectability
Lemma (DL).  Originally proved in \Ref{ref:DL} in the context of
quantum constraint satisfaction and promise gap amplification, a
simpler and stronger version of the DL was proved in \Ref{ref:DL2},
where it was used in the context of gapped, frustration-free local
Hamiltonians. This is the form that would be used here. To state it,
consider  a gapped 1D frustration-free
Hamiltonian with nearest-neighbor interactions that is defined on
a chain of $n$ $d$-dimensional particles,
\begin{align}
  H=\sum_{i=1}^{n-1} Q_i \ . 
\end{align}
As explained previously, we assume that the 2-local interactions
terms $Q_i$ are projections, and we set $P_i\EqDef \Id-Q_i$ to be
the projection of the local ground space of every term. 

\begin{figure*}
  \begin{center}
    \includegraphics[scale=0.8]{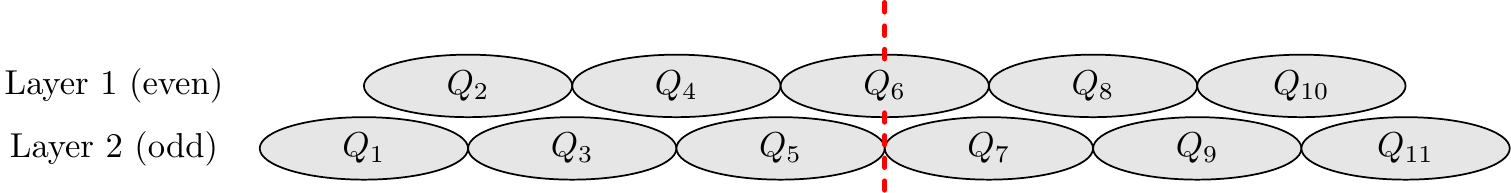}
  \end{center}  
  \caption{ The  settings  of the detectability lemma for a 1D system
  with 12 particles  $H=\sum_{i=1}^{11}   Q_i$. The local interaction
  terms, $Q_i$, are divided into two  layers:  the even layer and the
  odd layer.  The dashed red line denotes a cut in the system between
  particles 6 and 7. \label{fig:2layers} }
\end{figure*}  

We partition the projections into two sets which we call
\emph{layers}: the odd layer $Q_1, Q_3, Q_5, \ldots$ and the even
layer $Q_2, Q_4, Q_6, \ldots$. Within each layer the projections are
non-intersecting and, therefore, are commuting. It follows that
$\Pi_{odd} \EqDef P_1\cdot P_3\cdot P_5\cdots$ and $\Pi_{even}
\EqDef P_2\cdot P_4\cdot P_6\cdots$ are the projections into the
ground space of the odd and even layers respectively.  An
illustration of this construction is shown in \Fig{fig:2layers}. We
then have:

\begin{lem}[Detectability Lemma (DL) in $1D$]
\label{lem:DL}
  Let $A\EqDef\Pi_{odd}\Pi_{even}$, and let $\Hip$ be the orthogonal
  complement of the ground space. Then
  \begin{align}
  \label{eq:shrink}
      \norm{A|_{\Hip}}^2
        \leq \Delta_0(\epsilon)\EqDef\frac{1}{(\epsilon/2 + 1)^{2/3}} \ .
  \end{align}
\end{lem}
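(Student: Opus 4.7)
The plan is to pass from $A$ to the positive semidefinite operator $M \EqDef A^\dagger A = \Pi_{even}\Pi_{odd}\Pi_{even}$, take a power, and extract a root. Both $\Pi_{even}$ and $\Pi_{odd}$ fix $\ket{\gs}$, so $\Hip$ is invariant under $A$, $A^\dagger$, and $M$, and $M|_{\Hip}$ is PSD; hence $\norm{A|_{\Hip}}^2 = \norm{M|_{\Hip}}$ and more generally $\norm{M^k|_{\Hip}} = \norm{M|_{\Hip}}^k$ for every integer $k\ge 1$. The unusual exponent $2/3$ in the target $\Delta_0(\epsilon) = 1/(\epsilon/2+1)^{2/3}$ suggests choosing $k=3$ and reducing the claim to the equivalent statement $\norm{M^3|_{\Hip}} \le 1/(1+\epsilon/2)^2$, since taking a cube root then immediately yields $\norm{A|_{\Hip}}^2 \le 1/(1+\epsilon/2)^{2/3}$ as required.

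Two ingredients should drive the proof of this power bound. First, because the projections within a single layer are non-overlapping they commute, so $\Pi_{even}$ and $\Pi_{odd}$ are themselves orthogonal projections, and the quantum union bound for commuting projections gives the operator inequalities $\Id - \Pi_{odd}\le H_{odd} \EqDef \sum_{i \text{ odd}} Q_i$ and $\Id - \Pi_{even}\le H_{even} \EqDef \sum_{i \text{ even}} Q_i$. Second, the gap hypothesis yields $H=H_{odd}+H_{even}\ge \epsilon \Id$ on $\Hip$, so along the orbit $\ket{\psi}\mapsto \Pi_{odd}\ket{\psi}\mapsto \Pi_{even}\Pi_{odd}\ket{\psi}\mapsto \cdots$ inside $\Hip$, after any single-layer projection the resulting state lies in the common ground space of that layer and, remaining in $\Hip$, must therefore have its entire $\ge \epsilon$-worth of violation energy concentrated on the opposite layer. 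Chaining a half-step shrinkage estimate of the form ``layer energy $\ge \epsilon/2$ implies that the next projection on that layer shrinks the squared norm by $(1+\epsilon/2)^{-1}$'' along the seven-fold alternating product $M^3 = \Pi_{even}\Pi_{odd}\Pi_{even}\Pi_{odd}\Pi_{even}\Pi_{odd}\Pi_{even}$ should then collect the factor $(1+\epsilon/2)^{-2}$ that feeds the cube root.

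The main obstacle is the half-step shrinkage. The commuting union bound $\Id - \Pi_{layer}\le H_{layer}$ runs in the \emph{wrong} direction for our purpose: a sector with $|S|$ violations contributes proportionally to $|S|$ to $H_{layer}$ but only $1$ to $\Id-\Pi_{layer}$, so a large expected layer energy does not by itself imply a correspondingly large ``probability of violation'', and a naive argument would lose a factor equal to the maximum number of violations — a volume-law loss. The fix, and the technical core of the lemma, is to decompose the current state over the simultaneous eigenspaces of the commuting layer-$Q_i$'s, index them by the violation subset $S$ with weights $p_S$, and invoke a sharp sectorwise scalar inequality of Jensen/convexity type for the convex map $x\mapsto 1/(1+x)$, which bypasses this loss and converts a first-moment bound on $|S|$ into a multiplicative shrinkage with base $(1+\epsilon/2)^{-1}$. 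Threading this sharp single-step estimate through seven consecutive alternations without compounding slack, and without ``double counting'' the gap on the same layer, is precisely what produces the exponent $2/3$ in $\Delta_0$.
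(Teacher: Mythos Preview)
The paper does not prove \Lem{lem:DL}; it is quoted from \Ref{ref:DL2}, so there is no in-paper argument to compare against. I can, however, assess your plan on its own terms and against what the cited proof actually uses.

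Your proposal has a genuine gap at precisely the step you yourself flag as the ``main obstacle.'' The implication you need --- that $\bra{\phi}H_{layer}\ket{\phi}\ge (\epsilon/2)\norm{\phi}^2$ forces $\norm{\Pi_{layer}\phi}^2\le (1+\epsilon/2)^{-1}\norm{\phi}^2$ --- is false, and no Jensen-type inequality for $x\mapsto 1/(1+x)$ rescues it. Decompose $\ket{\phi}$ over the violation sectors of the layer with weights $p_S$; then $\norm{\Pi_{layer}\phi}^2/\norm{\phi}^2 = p_\emptyset$ while the layer energy equals $\sum_S |S|\,p_S$. Put mass $1-\delta$ on $S=\emptyset$ and mass $\delta$ on a single sector with $k$ violations, choosing $k\delta=\epsilon/2$: the energy is exactly $\epsilon/2$, yet $p_\emptyset=1-\delta\to 1$ as $k\to\infty$. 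Jensen for the convex map $f(x)=1/(1+x)$ only gives the \emph{lower} bound $\sum_S p_S\,f(|S|)\ge f\big(\sum_S p_S|S|\big)$, which cannot be turned into an upper bound on $p_\emptyset$. Iterating over seven alternations does not help either: nothing prevents the same heavy-tail obstruction from recurring at every step, and the best one can extract from first moments alone is a per-step shrinkage of order $1-\epsilon/k_{\max}$ with $k_{\max}$ the number of terms in the layer --- exactly the volume-law loss you wanted to avoid.

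What the argument in \Ref{ref:DL2} actually exploits, and what is entirely absent from your outline, is the \emph{local} geometry between the two layers: each odd term $Q_i$ fails to commute with only a bounded number $g$ of even terms (here $g=2$ in 1D). This bounded overlap lets one commute $Q_i$ through all of $\Pi_{even}$ except a local piece involving its $\le g$ even neighbors, reducing the interplay of $Q_i$ with $\Pi_{even}$ to a finite local problem; it is this mechanism, not a global convexity bound on the violation count, that converts the gap condition on $\Hip$ into a multiplicative shrinkage with the geometric constant $g$ showing up in $\Delta_0$. Your reduction via $M=A^\dagger A$ and the cube-root target $\norm{M^3|_{\Hip}}\le(1+\epsilon/2)^{-2}$ is a reasonable scaffold, but to fill it you will need to bring in the bounded-interaction-degree structure between the layers rather than a sectorwise scalar inequality.
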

For brevity, we will often drop the $\epsilon$ dependence, and simply write
$\Delta_0$. We note that the DL is not restricted to the 1D case,
and can be easily generalized to any dimension by using more than two
layers.

In the commuting case, $\Pi_{even}$ and $\Pi_{odd}$ commute with
each other, and their product, $A\EqDef \Pi_{odd}\cdot\Pi_{even}$ is
the projection into the ground state of the system. Generally,
however, they do not commute, and as a result $A$ is only an
\emph{approximation} to the ground space projection: It leaves the
ground space invariant while shrinking its perpendicular space by
some factor. The DL quantifies this approximation: It tells us that
the shrinking factor is bounded away from 1 by a constant
$\Delta_0(\epsilon)$ that depends on the spectral gap and not on the
system size.

\subsection{Chebyshev polynomials of the first kind}
\label{sec:Cheby}

We recall some basic facts about Chebyshev polynomials of the first
kind. More information can be found in Chapter 22 of
\Ref{ref:AbramowitzStegun64} or in any standard text book on
approximation theory. This is a family of polynomials $\{T_n(x)\}$,
where $n$ denotes the degree of the polynomial. They are solutions
of the ODE
\begin{align}
\label{eq:defeq}
  (1-x^2)\frac{d^2y}{dx^2}-x\frac{dy}{dx}+n^2 y =0 \ .
\end{align}
For $|x|\le 1$ they are given explicitly by 
\begin{align}
  \label{def:Tn}
  T_n(x) \EqDef \cos(n\cos^{-1}x) \ ,
\end{align}
and, for general $x$, they are given recursively by
\begin{align}
  T_0(x) &= 1 \ ,\\
  T_1(x) &=x \ , \\
  T_{n+1}(x) &= 2xT_n(x)-T_{n-1}(x) \ .
\end{align}

By the recursion relation, the coefficient of $x^n$ in $T_n(x)$ is
$2^{n-1}$, and by \Eq{def:Tn}, it has $n$ simple roots (Chebyshev
nodes) in $(-1,1)$, $x_k =
\cos\left(\frac{\pi}{2}\frac{2k-1}{n}\right), k=1, \ldots, n$.
Therefore,
\begin{align}
\label{eq:roots}
  T_n(x) = 2^{n-1}\prod_{k=1}^n (x-x_k) \ .
\end{align}

We conclude this section with an inequality about the behavior
$T_n(x)$ near $x=-1$ that will be used in the proof of the diluting
lemma. From \Eq{def:Tn} it follows that for all $-1\le x\le 1$,
$|T_n(x)|\le 1$ and that $T_n(1) = 1$; $T_n(-1)=(-1)^n$. Finally,
let us look at the derivative of $T_n(x)$ at $x=-1$: substituting
$x=-1$ at \Eq{eq:defeq} gives us $T_n'(-1) + n^2T_n(-1)=0$ and so
$T_n'(-1)=-n^2T_n(-1) = (-1)^{n+1}n^2$.  Since $T_n(x)$ has all its
minimas/maximas inside $[-1,1]$, its derivative $T'_n(x)$ outside
that region can only grow in absolute value, and, therefore, for all
$\delta>0$:
\begin{align}
\label{eq:Tn-outside}
   |T_n(-1-\delta)| \ge |T_n(-1)| + \delta |T_n'(-1)|\ge 1 + n^2\delta \ .
\end{align}

\section{Proof of the main theorem}
\label{sec:proof}

Recall from the outline of the proof presented in the introduction,
that our goal is to construct an operator $K$ whose effect is to
rapidly increase the overlap with the ground state, while only 
slowly increasing the Schmidt rank. We formalize this property of
$K$ in the notion of a $(D, \Delta)$-AGSP below and then show that
if the trade-off between $D$ and $\Delta$ if favorable, i.e.,
$D\cdot\Delta < 1/2$, then we can show that there is a product state
that has large overlap with the ground state, which, in turn, leads
to a bound on the entanglement entropy of the ground state.  Once
this is established, we can move on to the central construction of
this work, which is performed in the \emph{diluting
lemma}~\ref{lem:diluting}. This is where an operator with the proper
trade-off of $D$ and $\Delta$ is constructed.

We begin with a quantitative definition of an operator that moves
any vector toward the ground state.
\begin{deff}
  \textbf{An Approximate Ground-Space Projection (AGSP)} \\
  \label{def:AGSP} 
  Consider a local Hamiltonian system $H=\sum_i H_i$ on a 1D chain,
  together with a cut between particles $i^*$ and $i^*+1$ that
  bi-partitions the system. We say that an operator $K$ is a $(D,
  \Delta)$-Approximate Ground Space Projection (with respect to the
  cut) if the following holds:
  \begin{itemize}
    \item \textbf{Ground space invariance:} for any ground state
      $\ket{\gs}$, $K\ket{\gs} = \ket{\gs}$.
      
    \item \textbf{Shrinking:} for any state $\ket{\gsp}\in \Hip$,
      also $K\ket{\gsp}\in\Hip$, and $\norm{K\ket{\gsp}}^2 \le
      \Delta$.
      
    \item \textbf{Entangling:} for any state $\ket{\phi}$,
      $\SR{K \ket{\phi}} \le D \cdot\SR{\phi}$, where
      $\SR{\cdot}$ is evaluated with respect to the bi-partitioning.
  \end{itemize}
  
  We refer to $D$ as the SR factor and $\Delta$ as the shrinking
  factor.
\end{deff}
We note that with this definition, the DL combined with
Fact~\ref{f:SR} implies that $A=\Pi_{even}\Pi_{odd}$ is a $(D_0,
\Delta_0)\EqDef(d^2, (1+\epsilon/2)^{-2/3})$-AGSP.  In its bare
form, however, this operator is not useful to us since its SR factor
is too large with respect to its shrinking factor. Specifically, it
turns that the important quantity to consider is the product
$D\cdot\Delta$.  The following lemma shows how having a
$(D,\Delta)$-AGSP with $D\cdot\Delta < \frac{1}{2}$ implies the existence
of a product state whose overlap with the ground state is at least
$1/\sqrt{2D}$.
\begin{lem}
\label{lem:mu1} 
  If there exists an $(D, \Delta)$-AGSP with $D\cdot\Delta \leq
  \frac{1}{2}$, then there is a product state
  $\ket{\phi}=\ket{L}\otimes\ket{R}$ whose overlap with
  the ground state is\footnote{Without loss of generality, we may absorb
  any phases of $\mu$ into $\ket{\phi}$ and assume $\mu$ is real and
  positive.} $\mu=\braket{\gs}{\phi} \ge 1/\sqrt{2D}$.
\end{lem}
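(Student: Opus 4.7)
The plan is to prove that the largest Schmidt coefficient $\lambda_1$ of $\ket{\gs}$ across the given cut already satisfies $\lambda_1 \ge 1/\sqrt{2D}$. If so, the top Schmidt vector $\ket{\phi}\EqDef \ket{L_1^{\gs}}\otimes\ket{R_1^{\gs}}$ of $\ket{\gs}$ is itself a product state with $\braket{\gs}{\phi} = \lambda_1 \ge 1/\sqrt{2D}$, giving exactly the $\ket{L}\otimes\ket{R}$ required. Recall from Fact~\ref{f:rankapprox} (at $r=1$) that $\lambda_1$ is the supremum of $|\braket{\gs}{L R}|$ over all \emph{normalized product states} $\ket{L}\ket{R}$; this characterization is the one property I will use repeatedly.

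With this reduction in hand, I would feed the particular product state $\ket{\phi} = \ket{L_1^{\gs}}\ket{R_1^{\gs}}$ into the AGSP $K$ and extract two pieces of information in parallel. First, write $\ket{\phi} = \lambda_1\ket{\gs} + \sqrt{1-\lambda_1^2}\ket{\gsp}$ with $\ket{\gsp}\in\Hip$. Ground-space invariance plus the fact that $K$ maps $\Hip$ into itself then yield
\[
\braket{\gs}{K\phi} \;=\; \lambda_1,\qquad \norm{K\phi}^2 \;\le\; \lambda_1^2 + (1-\lambda_1^2)\Delta \;\le\; \lambda_1^2 + \Delta,
\]
using the shrinking property $\norm{K\gsp}^2\le\Delta$. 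Second, the entangling property gives $\SR{K\phi}\le D\cdot\SR{\phi}=D$, so $K\ket{\phi}$ admits a Schmidt decomposition
\[
K\ket{\phi} \;=\; \sum_{j=1}^{D}\sigma_j\,\ket{L_j}\otimes\ket{R_j},\qquad \sum_{j=1}^{D}\sigma_j^{\,2}=\norm{K\phi}^2,
\]
in which every $\ket{L_j}\ket{R_j}$ is a normalized product state.

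The heart of the proof is a one-line Cauchy--Schwarz estimate that couples these two pieces. Expanding the inner product and invoking the pointwise bound $|\braket{\gs}{L_j R_j}|\le \lambda_1$ from Fact~\ref{f:rankapprox}, I would write
\[
\lambda_1^{\,2} \;=\; \Big|\sum_{j=1}^{D}\sigma_j\braket{\gs}{L_j R_j}\Big|^{2} \;\le\; \Big(\sum_{j=1}^{D}\sigma_j^{\,2}\Big)\Big(\sum_{j=1}^{D}|\braket{\gs}{L_j R_j}|^{2}\Big) \;\le\; \norm{K\phi}^2\cdot D\lambda_1^{\,2} \;\le\; D\lambda_1^{\,2}\,(\lambda_1^{\,2}+\Delta).
\]
Dividing through by $\lambda_1^{\,2}>0$ (the case $\lambda_1=0$ is vacuous, since it would force $\ket{\gs}=0$) gives $1 \le D\lambda_1^{\,2} + D\Delta$; substituting the hypothesis $D\Delta\le 1/2$ yields $\lambda_1^{\,2}\ge 1/(2D)$, as required.

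The only nonobvious step, and therefore the one I would call the main obstacle, is the self-referential choice of test state $\ket{\phi}=\ket{L_1^{\gs}}\ket{R_1^{\gs}}$. This choice makes the \emph{same} scalar $\lambda_1$ play two different roles: it is the overlap $\braket{\gs}{K\phi}$ on the left of the Cauchy--Schwarz inequality, and it is also the uniform pointwise bound on each of the $D$ inner products $\braket{\gs}{L_j R_j}$ on the right. This double duty produces the decisive factor $D\lambda_1^{\,2}$ on the right-hand side, after which the threshold $D\cdot\Delta\le 1/2$ drops out of elementary algebra. All remaining ingredients are direct applications of the three defining properties of an AGSP together with the Eckart--Young characterization of $\lambda_1$ already recorded in Fact~\ref{f:rankapprox}.
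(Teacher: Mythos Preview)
Your proof is correct and uses the same core machinery as the paper: apply $K$ to a product state, Schmidt-decompose the result into at most $D$ product terms, and combine Cauchy--Schwarz with the shrinking bound $\norm{K\phi}^2\le\lambda_1^2+\Delta$. The difference is in packaging. The paper argues by improvement: starting from \emph{any} product state with overlap $\mu<1/\sqrt{2D}$, it pigeonholes to find some Schmidt vector of $K\ket{\phi'}$ with strictly larger overlap, so the supremum must be at least $1/\sqrt{2D}$. You instead start directly from the extremal product state (the top Schmidt vector of $\ket{\gs}$) and use extremality of $\lambda_1$ as a \emph{uniform upper bound} on every term $|\braket{\gs}{L_jR_j}|$, which turns the same Cauchy--Schwarz step into the self-consistent inequality $1\le D(\lambda_1^2+\Delta)$. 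Your route is a bit more direct---one inequality rather than an iterative argument---while the paper's version has the minor advantage that it never needs to name $\lambda_1$ or invoke Fact~\ref{f:rankapprox} for the pointwise bound. Substantively the two arguments are equivalent: plugging the extremal $\mu=\lambda_1$ into the paper's improvement step yields exactly your inequality.
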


\begin{proof}

  Let $K$ be a $(D, \Delta)$-AGSP with $D\cdot\Delta \leq \frac{1}{2}$
  and $\ket{\phi'}$ a product state
  $\ket{\phi'}\EqDef\ket{L'}\otimes\ket{R'}$ whose overlap with the
  ground state is $\mu=\braket{\gs}{\phi'} < 1/\sqrt{2D}$. Below we
  show that there exists another product state with a larger overlap
  with the ground state.

  By assumption, we can expand $\ket{\phi'}$ as $\ket{\phi'} =
  \mu\ket{\gs} + (1-\mu^2)^{1/2}\ket{\gs^\perp}$, with
  $\ket{\gs^\perp}\in \Hip$. Let $\ket{\phi_1}\EqDef K\ket{\phi'}$.
  Since $K$ is an $(D,\Delta)$-AGSP, it follows that $\ket{\phi_1} =
  \mu\ket{\gs} + \delta_1\ket{\gs^\perp_1}$ with
  $\ket{\gs^\perp_1}\in \Hip$, $\delta_1^2 \le \Delta$, and
  $\SR{\phi_1}\le D$.

  Setting $\ket{v}= \frac{1}{\norm{\phi_1}}\ket{\phi_1}$ to be the
  normalization of $\ket{\phi_1}$, we have $\SR{v}=\SR{\phi_1}\le
  D$. Therefore, its Schmidt decomposition can be written as
 $\ket{v} = \sum_{i=1}^D \lambda_i\ket{L_i}\ket{R_i}$.
  It follows that 
  \begin{align*}
   |\braket{\gs}{v}| &= \frac{\mu}{\norm{\phi_1}}
      \le \sum_{i=1}^D \lambda_i |\braket{\gs}{L_i}\ket{R_i}|
     \le \sqrt{\sum_{i=1}^D |\braket{\gs}{L_i}\ket{R_i}|^2} \ ,
  \end{align*}
  where the last inequality follows from Cauchy-Schwartz and the
  fact that $\sum_i\lambda_i^2=1$. Therefore there must be an $i$
  for which 
  \begin{align*}
    |\braket{\gs}{L_i}\ket{R_i}|^2 
     \ge \frac{\mu^2}{D\norm{\phi_1}^2}
     = \frac{\mu^2}{D(\mu^2+\delta_1^2)}
     \ge \frac{\mu^2}{D(\mu^2+\Delta)} \ .
  \end{align*}
  But since $D\cdot\Delta<1/2$, and, by assumption
  $\mu<\frac{1}{\sqrt{2D}}$, it follows that $D(\mu^2+\Delta)<1$,
  and so the overlap of $\ket{L_i}\ket{R_i}$ with the ground state
  is larger than $\mu$. 
\end{proof}

With this bound in place, we start from the product state with the
maximal overlap with the ground state, and use any AGSP to obtain
controlled approximations of the ground state, from which an upper
bound on its entropy can be found. A very similar argument was used
in Hastings' proof of the 1D area law.\cite{ref:Has07}
\begin{lem}
\label{lem:S} 
  If there exists a product state whose overlap with the ground
  state is at least $\mu$, together with a $(D, \Delta)$-AGSP,
  then the entanglement entropy of $\ket{\gs}$ is bounded by
  \begin{align}
  \label{eq:Sb}
    S\le \orderof{1}\cdot\frac{ \log \mu^{-1}}{\log \Delta^{-1}} \log D \ .
  \end{align}
 \end{lem}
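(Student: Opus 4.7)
The plan is to iterate the AGSP starting from the high-overlap product state and then extract bounds on the Schmidt coefficients of $\ket{\gs}$ from the sequence of approximants. Concretely, let $\ket{\phi}$ be the given product state, so $\SR{\phi}=1$, and set $\ket{\phi_\ell}\EqDef K^\ell\ket{\phi}$. Writing $\ket{\phi}=\mu\ket{\gs}+\sqrt{1-\mu^2}\ket{\gsp}$ and applying the ground-space invariance and shrinking properties of $K$ inductively, I get $\ket{\phi_\ell}=\mu\ket{\gs}+\delta_\ell\ket{\gsp_\ell}$ with $\delta_\ell^2\le \Delta^\ell$ and $\SR{\phi_\ell}\le D^\ell$. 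Normalizing, the vector $\ket{v_\ell}\EqDef \ket{\phi_\ell}/\|\phi_\ell\|$ satisfies $\SR{v_\ell}\le D^\ell$ and $|\braket{\gs}{v_\ell}|^2 \ge \mu^2/(\mu^2+\Delta^\ell)$.

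Next I would invoke Fact~\ref{f:rankapprox} (Eckart--Young): since $\ket{v_\ell}$ has Schmidt rank at most $D^\ell$, the best rank-$D^\ell$ approximation to $\ket{\gs}$ has squared overlap at least $|\braket{\gs}{v_\ell}|^2$. Denoting by $\lambda_1\ge\lambda_2\ge\cdots$ the Schmidt coefficients of $\ket{\gs}$ along the bi-partition, this yields the tail estimate
\begin{align}
\sum_{j>D^\ell}\lambda_j^2 \;\le\; \frac{\Delta^\ell}{\mu^2+\Delta^\ell} \;\le\; \min\!\bigl(1,\;\Delta^\ell/\mu^2\bigr).
\end{align}
I would then convert this into an entropy bound by partitioning the Schmidt indices into dyadic blocks $B_\ell=\{j:D^\ell<j\le D^{\ell+1}\}$ of size $|B_\ell|\le D^{\ell+1}$, and defining $m_\ell\EqDef\sum_{j\in B_\ell}\lambda_j^2$. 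The tail estimate above gives $m_\ell\le \min(1,\Delta^\ell/\mu^2)$.

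The entropy decomposes as
\begin{align}
S \;\le\; \sum_{\ell\ge 0} m_\ell \log|B_\ell| \;+\; \sum_{\ell\ge 0} m_\ell\log(1/m_\ell),
\end{align}
where the first ``location'' term uses the maximum-entropy bound within each block and the second is the entropy of the block-mass distribution. Setting $\ell^*\EqDef \log(\mu^{-2})/\log(\Delta^{-1})$, I split each sum at $\ell^*$: for $\ell\le\ell^*$ I use $m_\ell\le 1$ together with $\sum_\ell m_\ell=1$, and for $\ell>\ell^*$ I use the geometric bound $m_\ell\le \Delta^{\ell-\ell^*}$. A short calculation bounds $\sum_\ell m_\ell(\ell+1)\log D$ by $O(\ell^*\log D)$, while standard estimates bound the second sum by $O(\log\ell^*+\log(1/\Delta))$, which is a lower-order correction. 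This yields $S\le \orderof{1}\cdot \frac{\log\mu^{-1}}{\log\Delta^{-1}}\log D$.

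The main obstacle I anticipate is ensuring that the ``magnitude'' entropy $\sum m_\ell\log(1/m_\ell)$ does not contaminate the leading term. The key point is that $\{m_\ell\}$ is effectively supported on roughly $\ell^*$ blocks followed by a geometric tail, so its entropy is only logarithmic in $\ell^*$; compared to the location contribution $\ell^*\log D$ this is negligible (absorbing into the $\orderof{1}$). The other bookkeeping subtlety is the treatment of the very first few blocks, where $\SR{\phi}=1$ makes the iteration well-defined and the bounds tight; once that is handled, the rest is straightforward algebra.
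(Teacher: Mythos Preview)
Your approach is essentially the paper's: iterate the AGSP on the product state, extract Schmidt-tail bounds via Eckart--Young (Fact~\ref{f:rankapprox}), partition the Schmidt spectrum into blocks indexed by powers of $D$, and sum. Your split of the entropy into a ``location'' piece $\sum_\ell m_\ell\log|B_\ell|$ and a ``magnitude'' piece $H(\{m_\ell\})$ is a clean way to organize the same calculation the paper carries out directly on the intervals $[D^{2\ell}+1,D^{2(\ell+1)}]$.

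There is, however, one genuine gap. Your ``short calculation'' for the location term and your ``standard estimates'' for the magnitude term both hide constants that blow up as $\Delta\to 1$. From the tail bound $\sum_{\ell'\ge\ell} m_{\ell'} \le \Delta^{\ell}/\mu^2$ one gets
\begin{align*}
\sum_{\ell\ge 0} m_\ell(\ell+1) \;\le\; \ell^* + O\Bigl(\tfrac{1}{1-\Delta}\Bigr),
\end{align*}
and likewise the entropy of a geometric tail with ratio $\Delta$ is of order $1/(1-\Delta)$, not $\log(1/\Delta)$ as you wrote. When $\Delta$ is close to $1$ these extra terms can dominate $\ell^*\log D$, and the stated bound does not follow with a universal $\orderof{1}$ constant.

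The paper closes this gap with one additional step you omit: since $K^k$ is a $(D^k,\Delta^k)$-AGSP for every integer $k\ge 1$, one first replaces $K$ by $K^k$ with $k=\lceil 1/\log(\Delta^{-1})\rceil$, forcing the new shrinking factor into $[1/4,1/2]$. All $\Delta$-dependent constants then become absolute, while the target quantity is unchanged because $\frac{\log\mu^{-1}}{\log\Delta^{-k}}\log D^k = \frac{\log\mu^{-1}}{\log\Delta^{-1}}\log D$. With this reduction in place, your argument goes through.
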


%
%

The proof can be found in the appendix.  The brief overview is that we
begin with the asserted product state, and repeatedly apply the
AGSP to it.  The result is a series of vectors with increasing SR
by a factor $D$ each time, that approach the ground state at a
rate quantified by powers of $\Delta$.  Using these vectors and the
Young-Eckart theorem (Fact~\ref{f:rankapprox}) provides adequate
upper bounds for the high Schmidt coefficients of the ground state
to bound the entropy. 

Lemmas~\ref{lem:S}~and~\ref{lem:mu1} can be combined to give
\begin{corol}
  \label{col:half}
  If there exists an $(D,\Delta)$-AGSP such that
  $D\cdot\Delta\le\frac{1}{2}$, the ground state entropy is bounded
  by:
  \begin{align}
  \label{eq:AL}
    S\le \orderof{1}\cdot\log D \ .
  \end{align} 
 \end{corol}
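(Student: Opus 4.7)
The plan is to simply chain together Lemma~\ref{lem:mu1} and Lemma~\ref{lem:S}, since the hypothesis $D\cdot\Delta\le 1/2$ is exactly what feeds the first lemma, and then what it produces is exactly what feeds the second lemma. There is essentially no new argument to make; the only task is to verify that the ratio $\log\mu^{-1}/\log\Delta^{-1}$ that appears in \Eq{eq:Sb} is bounded by a universal constant under the assumption $D\Delta\le 1/2$.

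First, I apply \Lem{lem:mu1} to the given $(D,\Delta)$-AGSP. Since $D\cdot\Delta\le 1/2$ by hypothesis, the lemma produces a product state $\ket{\phi}=\ket{L}\otimes\ket{R}$ whose overlap with the ground state satisfies $\mu=\braket{\gs}{\phi}\ge 1/\sqrt{2D}$. Next, I feed this product state, together with the same AGSP, into \Lem{lem:S}, which immediately gives
\begin{align*}
  S\le \orderof{1}\cdot\frac{\log\mu^{-1}}{\log\Delta^{-1}}\cdot\log D.
\end{align*}

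It remains to bound the prefactor. From $\mu\ge 1/\sqrt{2D}$ I get $\log\mu^{-1}\le \tfrac{1}{2}\log(2D)=\tfrac{1}{2}(1+\log D)$, and from $\Delta\le 1/(2D)$ I get $\log\Delta^{-1}\ge \log(2D)=1+\log D$. Taking the ratio, the two factors of $1+\log D$ cancel, yielding $\log\mu^{-1}/\log\Delta^{-1}\le 1/2$. Substituting this into the bound above gives $S\le \orderof{1}\cdot\log D$, which is \Eq{eq:AL}.

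There is no real obstacle here; the genuinely nontrivial content sits inside \Lem{lem:mu1} (the ``racing'' argument that turns a favorable AGSP into a good product approximation) and \Lem{lem:S} (the Eckart--Young-based truncation that converts successive AGSP applications into an entropy bound). The corollary itself is just the algebraic observation that the threshold $D\Delta\le 1/2$ is precisely the regime in which the $\log D$ factors on the numerator and denominator of \Eq{eq:Sb} balance, so that the dependence on $\Delta$ and $\mu$ collapses into a universal constant and only $\log D$ survives.
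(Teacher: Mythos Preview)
Your proof is correct and follows exactly the approach the paper intends: combine \Lem{lem:mu1} and \Lem{lem:S}, then observe that under $D\Delta\le 1/2$ the ratio $\log\mu^{-1}/\log\Delta^{-1}$ is bounded by an absolute constant. The paper does not spell out the algebra, but your computation that $\log\mu^{-1}\le\tfrac12(1+\log D)$ and $\log\Delta^{-1}\ge 1+\log D$ is precisely the intended verification.
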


We are left, therefore, with the challenge of designing an operator
$K$ which is a $(D,\Delta)$-AGSP with $D\cdot\Delta\le 1/2$. This is
the driving construction of this work. It is stated in the
following lemma, and is proved in the next section. 

\begin{lem}[The diluting lemma]
\label{lem:diluting} Consider a 1D gapped frustration-free Hamiltonian,
  with a spectral gap $\epsilon>0$ and particle dimension $d$, and
  define $X\EqDef \frac{\log d}{\epsilon}$. Then for any cut in the
  chain there exists an $(D,\Delta)$-AGSP with $D\cdot\Delta<1/2$ and
  \begin{align}
  \label{eq:D-bound}
    \log D \le\orderof{1}\cdot X^3\log^8 X \ .
  \end{align}
\end{lem}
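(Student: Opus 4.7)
The plan is to build the required AGSP $K = \hA^\ell$ from an amplified, ``diluted'' variant of the detectability-lemma operator, following the three-step strategy sketched in the introduction: coarse-grain, approximate the projector onto the zero-violation sector of $m$ projections around the cut by a low-degree polynomial $\hPi_m$, and then amortize the Schmidt-rank blow-up by a recursive grouping. I would begin from $A = \Pi_{odd}\Pi_{even}$, which by \Lem{lem:DL} and Fact~\ref{f:SR} is a $(D_0,\Delta_0)$-AGSP with $D_0=d^2$ and $\Delta_0=(1+\epsilon/2)^{-2/3}$, and fuse $k$ consecutive sites to upgrade it to a $(D_0^k,\Delta_0^k)$-AGSP. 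I would then single out a block of $m$ coarse-grained projections symmetric about the cut inside the even layer, write $\Pi_{even}=\Pi_m\Pi_{rest}$, and replace $\Pi_m$ by a polynomial $\hPi_m = p(\BN)$ in the violation-counting operator $\BN=\sum_{i=1}^m(\Id-P_i)$.

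For $p$ I would use a shifted and rescaled Chebyshev polynomial $T_j$ of degree $j=\orderof{\sqrt{m}\log\delta^{-1}}$: map the integer interval $[1,m]$ into $[-1,1]$ (where $|T_j|\le 1$) and map $0$ just outside $[-1,1]$ to a point where, by \Eq{eq:Tn-outside}, $|T_j|$ grows like $1+\Omega(j^2/m)$. Dividing by that value produces a polynomial that sends $|\Bs|=0$ to $1$ and each violating eigenspace $|\Bs|\in\{1,\ldots,m\}$ to magnitude at most $\delta$. The resulting $\hPi_m$ is diagonal in the sector decomposition of the block, so $\hA\EqDef\hPi_m\Pi_{rest}\Pi_{odd}$ fixes $\ket{\gs}$ and shrinks $\Hip$ by a factor $\hDel\le\Delta_0^k+\delta$.

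The central obstacle is controlling the Schmidt rank of $K=\hA^\ell$. Expanding $\hA^\ell$ naively as a sum of $\poly(j)^\ell$ monomials in the $P_i$'s would be catastrophic, so I would argue as follows. Each monomial is a product of $j\ell$ local projection factors distributed over the $m$ block positions, so by pigeonhole there is some cut inside the block that it crosses at most $j\ell/m$ times; Fact~\ref{f:SR}(2) then bounds its Schmidt rank across \emph{that} cut by $D_0^{k\cdot j\ell/m}\simeq D_0^{\orderof{k\ell\log\delta^{-1}/\sqrt{m}}}$, and Fact~\ref{f:SR}(3) pays a one-time penalty $D_I=(D_0^k)^m$ to slide the witnessing cut to the cut of interest. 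To aggregate many monomials onto \emph{common} cuts, I would recursively split the product of $j\ell$ factors at a balancing midpoint, recurse on each half, and exploit the resulting binary tree of depth $\orderof{\log j}$ to decompose $\hA^\ell$ as a sum of only $2^{\orderof{\log^2 j}}$ operators, each of which has a single fixed cut on which the monomial bound applies uniformly. Combining these via Fact~\ref{f:SR}(1) yields $\log D\le\orderof{1}\cdot\bigl(\log^2 j+mk\log d+\ell k\log d\log\delta^{-1}/\sqrt{m}\bigr)$.

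It remains to balance $k,m,\ell,\delta$ so that $D\hDel^\ell<1/2$ while $\log D$ is as small as possible. Setting $\delta=\Delta_0^k$ gives $\hDel\le 2\Delta_0^k$, and choosing $k=\Theta(1/\epsilon)$ makes $\hDel$ a small constant, so that $\ell=\Theta(\log D)$ suffices to meet the AGSP condition. Balancing the dominant contributions $mk\log d$ and $\ell k\log d/\sqrt{m}$ (which, up to polylogarithmic factors, requires $m\simeq X^2\,\poly(\log X)$) and tracking the $\log j$ and $\log\delta^{-1}$ factors through should produce the stated bound $\log D\le\orderof{1}\cdot X^3\log^8 X$, completing the proof. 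The recursive grouping step is where I anticipate the bulk of the technical effort, since the whole improvement over prior work hinges on replacing the naive $\poly(j)^\ell$ blow-up by the $2^{\orderof{\log^2 j}}$ one.
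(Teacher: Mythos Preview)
Your overall architecture is exactly the paper's: coarse-grain, replace $\Pi_m$ by a Chebyshev polynomial in $\BN$, control the Schmidt rank of $\hA^\ell$ by a recursive grouping that pins the entanglement to a single column inside the $m$-block, and pay the one-time penalty $D_I=(D_0^k)^m$ to slide back to the given cut. The gap is in how the recursion actually runs and in the term count it produces. The recursion is not a split of ``the product of $j\ell$ factors at a balancing midpoint'' with recursion on both halves; it is a \emph{spatial} halving of the counting operator itself: write $\BN=\BN_L+\BN_R$ over the left and right halves of the current interval, expand each layer's power of $\BN$ binomially, and recurse \emph{only} on whichever half carries the smaller total degree summed over all $\ell$ layers (this is what makes both the interval and the total degree halve). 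Because the binomial expansion happens independently in each of the $\ell$ layers, the branching at depth $t$ is at most $(j/2^t+1)^\ell$, and the number of min-entangling pieces is $\prod_t(j/2^t+1)^\ell\le\big[2^{\orderof{\log^2 j}}\big]^\ell$, not $2^{\orderof{\log^2 j}}$ in total. In other words, the $2^{\orderof{\log^2 j}}$ overhead is a \emph{per-layer} contribution to $\hD$, not a one-time factor absorbable into $D_I$.

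This breaks your parameter choice. With $k=\Theta(1/\epsilon)$ and $\delta=\Delta_0^k$ the shrinking $\hDel$ is merely a fixed constant below $1$, while $\hD\ge 2^{\Omega(\log^2 j)}=2^{\Omega(\log^2 m)}$ diverges with $m$, so $\hD\cdot\hDel<\tfrac12$ is unattainable once $m$ is large. The repair is to let the shrinking strengthen with a free parameter $q$ (equivalently, let $\log\delta^{-1}$ grow): take $k=\Theta(q/\epsilon)$ so that $\hDel\le 2\cdot 2^{-\Theta(q)}$, and then solve $q=\Theta(\log^2 X)$ so that $\hDel^{-1}$ beats the $2^{\orderof{\log^2 j}}$ term in $\hD$. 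With $m=\Theta\big((qX)^2\big)$ this gives $\hD\cdot\hDel<\tfrac12$, $\log D_I=km\log D_0=\orderof{X^3\log^6 X}$, $\log\hD=\orderof{\log^2 X}$, and hence, taking $\ell=\Theta(\log D_I)$, $\log D=\orderof{\log D_I\cdot\log\hD}=\orderof{X^3\log^8 X}$ as claimed.
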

Substituting the result of this lemma in Corollary~\ref{col:half}
proves \Thm{thm:main}.

\section{Proving the diluting lemma (\Lem{lem:diluting})}
\label{sec:diluting}

We will prove the diluting lemma, by modifying the DL operator
$A\EqDef\Pi_{even}\Pi_{odd}$ to a new operator $\hA$, which is
\emph{not} an AGSP, but has similar properties (see
\Def{def:AGSP})
\begin{itemize}
  \item \textbf{Ground space invariance:} for any ground state
    $\ket{\gs}$, $\hA\ket{\gs} = \ket{\gs}$.
    
  \item \textbf{Shrinking:} for any state $\ket{\gsp}\in \Hip$,
    also $\hA\ket{\gsp}\in\Hip$, and $\norm{\hA\ket{\gsp}}^2 \le \hDel$.
    
  \item \textbf{Entangling:} for any state $\ket{\phi}$, and any
  integer $\ell>0$,
    $\SR{\hA^\ell \ket{\phi}} \le D_I\hD^\ell \cdot\SR{\phi}$.
    
  \item The parameters $D_I, \hD, \hDel$ satisfy
    \begin{align}
      \hD\cdot\hDel &\le \frac{1}{2} \ , \\
      \log \hD  &\le \log(\hDel^{-1}) \le \orderof{1}\cdot\log^2X \ , \\
     \log D_I   &\le \orderof{1} \cdot X^3\log^6 X \ .
    \end{align}
  
\end{itemize}

Having $\hA$ in hand, we can choose $\ell_0 = \lceil\log D_I\rceil$
and obtain the $(D,\Delta)$-AGSP 
\begin{align}
  \label{def:K}
  K\EqDef \hA^{\ell_0} 
\end{align}
with $(D,\Delta)=(D_I\hD^{\ell_0}, \hDel^{\ell_0})$. It satisfies 
$D\cdot\Delta<1/2$ and
\begin{align*}
  \log D =\orderof{1}\cdot \log D_I\cdot\log \hD
    \le\orderof{1}\cdot X^3\log^8 X \ .
\end{align*}

We now proceed to define the general form of $\hA$.

\subsection{General settings}

Without loss of generality, we assume that the bi-partitioning cut
in the chain intersects with an even projection (see
\Fig{fig:2layers}).  As a result, when applying $A$, only the
$\Pi_{even}$ portion of the operator increases the SR. Therefore,
our construction of $\hA$ will modify $\Pi_{even}$, leaving
$\Pi_{odd}$ intact.

\begin{figure*}
 \begin{center}
   \includegraphics[scale=1]{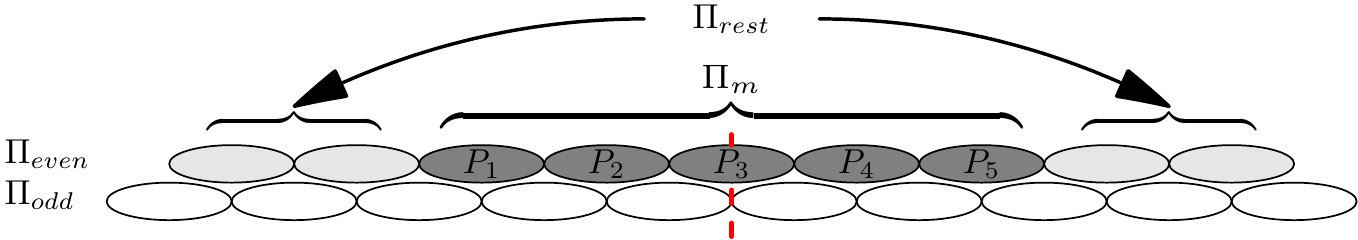}
 \end{center}
 \caption{An illustration of the decomposition
 $A=\Pi_{even}\Pi_{odd} = \Pi_{rest}\Pi_m\Pi_{odd}$ that is used to
 define $\Pi_m$ and its dilution $\hPi_m$. $\Pi_m$ is the product of
 the $m$ projections $P_1, \ldots, P_m$ that are found in the even
 layer around the cut.
 \label{fig:Im} }
\end{figure*}
We begin by considering the piece of $\Pi_{even}$ consisting of the
$m$ even projections closest to the cut.  We denote this set of
projections by $I_m$, and abusing the notation a
little, we relabel them by $P_1, \ldots, P_m$ and define
\begin{align}
  \Pi_m \EqDef \prod_{i=1}^m P_i \ ,
\end{align}
the projection into the common ground state of these projections. Note
that in this notation the cut intersects with $P_{i^*}$, where
$i^*=\lceil m/2 \rceil$ (see \Fig{fig:Im}).  Denoting by $\Pi_{rest}$
the product of all the remaining projections in $\Pi_{even}$, we have
$\Pi_{even}=\Pi_m\cdot\Pi_{rest}$. We will approximate $\Pi_m$ by an
operator $\hPi_m$, and then define $\hA$ by
\begin{align}
  \label{def:hA}
  \hA\EqDef \hPi_m\cdot\Pi_{rest}\cdot\Pi_{odd} \ .
\end{align}
Analysis of the amount of entanglement created by powers of $\hA$
will focus exclusively on the structure of $\hPi_m$ since
$\Pi_{rest}, \Pi_{odd}$ do not increase the SR along the cut.

One of the great benefits of using the DL is that the all
projections in a given layer commute, and hence the analysis becomes
almost classical.  Indeed, as we said in the outline of the proof, 
the projections in $I_m$ define a decomposition of
the Hilbert space of the system into a direct sum of $2^m$
eigenspaces, called sectors. Each sector is defined by a string
$\Bs=(s_1, \ldots, s_m)$, such that if $\ket{\psi_{\Bs}}$ is in the
$\Bs$ sector, $P_i\ket{\psi_{\Bs}} = (1-s_i)\ket{\psi_{\Bs}}$. A
site with $s_i=1$ is called \emph{a violation}, since it corresponds
to a non-zero energy of the corresponding local Hamiltonian term. We
denote by $|\Bs|=\sum_{i=1}^m s_i$ the total number of violations in
the sector $\Bs$.  Finally, we will also consider a coarse-grained
decomposition in which we group together all $\binom{m}{k}$ sectors
with $k$ violations. The direct sum of these subspaces is called the
$k$-violations sector.

Consider now an arbitrary state $\ket{\psi}$ and expand it as
$\ket{\psi}=\ket{\psi_0} + \ket{\psi_1}$, where $\ket{\psi_0}$ is its
projection on the zero violations sector and $\ket{\psi_1}$ is its
projection on the violating sectors. Clearly, $\Pi_m\ket{\psi} =
\ket{\psi_0}$.  To approximate this behavior, we will use the
$\{P_i\}$ projections to construct an operator $\hPi_m$ that is
diagonal in the sectors decomposition, and in addition
$\hPi_m\ket{\psi} = \ket{\psi_0} + \ket{\psi'_1}$, with
$\ket{\psi'_1}$ in the violating sectors and $\norm{\psi_1'}^2 \le
\delta\norm{\psi_1}^2$ for some error parameter $\delta>0$. Then it is
easy to verify that
$\hA\ket{\gs}=\ket{\gs}$, $\hA\ket{\gsp}\in\Hip$, and
$\norm{\hA\ket{\gsp}}^2\le(\Delta+\delta)\norm{\gsp}^2$. This gives us
\begin{align}
\label{def:hDel}
  \hDel = \Delta_0 + \delta \ .
\end{align}

\subsection{Constructing $\hPi_m$ -- a general discussion}

Before actually constructing the operator $\hPi_m$, it might be
useful to describe what we hope to accomplish. Recall that we would
like to show that the increase in SR due to application of
$\hA^\ell$ is bounded by $D_I \hD^{\ell}$, where $\hD\cdot\hDel \le
\frac{1}{2}$.  The rough idea is to show that there is some cut
between particles $i$ and $i+1$ within the support of $I_m$ such
that the SR across this cut does not grow much due to application of
$\hA^\ell$. This would account for the factor $\hD^{\ell}$ in the
above bound. Moreover, by Fact~\ref{f:SR}, the SR across the middle
cut $i^*,i^*+1$ can only be larger than the SR between $i, i+1$ by a
factor of at most $d^{|i-i^*|} \le  D_0^m \EqDef D_I$. 

There are several approaches to constructing $\hPi_m$. Perhaps the most
obvious one is by ``diluting''; instead of using a product of $m$
projections, we may use a product of $rm$ randomly chosen
projections for some $0<r<1$. After applying such $\hPi_m$ for
$\ell$ layers, there would be columns with fewer than $r\ell$
entangling projections, and so the SR along these columns will be
bounded by $D_0^{r\ell}$. Applying Fact~\ref{f:SR} as described 
above, the SR in the middle cut would be at most
$D_0^{r\ell}D_0^m$, implying $\hD=D_0^r$ and $D_I = D_0^m$.

What is the $\hDel$ factor of such construction? Intuitively, sectors
with high number of violations are easier to ``catch'' because there
is a higher chance of collision between one of their violations and
the $rm$ projections. Indeed, on average, the mass in the $k$ sector
is shrunk by at least
$\binom{m-k}{rm}/\binom{m}{rm}\le(1-r)^k$. But this means
that the low-violations sectors, and in particular the one-violation
sector, are barely shrunk. The latter can be shrunk by as little
as $1-r$, which means that $\hDel = \Delta_0 + 1-r$. This can never
get us to $\hD\cdot\hDel<1/2$.

To overcome this problem, we take a different approach for the
construction of $\hPi_m$ using the following operator:
\begin{deff}[The $\BN$ operator]
\label{def:BN}
  \begin{align}
    \label{eq:BN}
    \BN \EqDef \sum_{i=1}^m (\Id-P_i) \ .
  \end{align}
\end{deff}
The operator $\BN$ counts the number of violations in a sector: if
$\ket{\psi_{\Bs}}$ belongs to the $\Bs$ sector, $\BN\ket{\psi_{\Bs}}
= |\Bs|\cdot\ket{\psi_{\Bs}}$.

We can use the $\BN$ operator to annihilate the mass in the
low-violations sectors. Consider, for example, the operator $\BB_k
\EqDef \Id-\frac{1}{k}\BN$. When acting on $\ket{\psi_{\Bs}}$ we get
$\norm{\BB_k\ket{\psi_{\Bs}}}^2 =
\left(1-\frac{|\Bs|}{k}\right)^2\norm{\psi_{\Bs}}^2$. Therefore
$\BB_k$ completely annihilates the mass in the $k$ sector, while
leaving the 0 violations sector intact. In addition, as we shall see
in \Sec{sec:SR}, it does not create much entanglement.
Unfortunately, however, it blows up the mass of the high-violating
sectors $|\Bs|>k$. One possible solution is therefore to use the
$\BB_k$ operators in conjunction with the diluting approach. This
was indeed the approach taken in a previous paper
\cite{ref:FOCS2011-AreaLaw}, and it might prove beneficial in other
contexts. Here, however, we use a much simpler solution that relies
solely on the $\BN$ operators by utilizing the Chebyshev
polynomials.

\subsection{Constructing $\hPi_m$ using Chebyshev polynomials}

To construct $\hPi_m$ entirely from $\BN$, we want a polynomial of
minimal degree $P(x)$ such that $P(0)=1$ and $|P(x)|^2 \le \delta<1$
for every integer $1\le x\le m$. In such case, $\hPi_m \EqDef
P(\BN)$ will leave the zero-violations sector intact, while
shrinking the other sectors by $\delta$. One naive approach is to
take $P(x)=(1-x)\cdot(1-x/2)\cdots(1-x/m)$. This gives $\delta=0$ at
the price of a polynomial of degree $m$, the same degree as the
original $\Pi_m$ (thereby creating too much entanglement).  For a
lower degree polynomial with the desired properties we turn to the
Chebyshev polynomial, a central object in approximation theory.  As
noted in \Sec{sec:Cheby}, the Chebyshev polynomial of the first kind
$T_n(x)$ oscillates between $-1$ and $1$ in the region $[-1,1]$ and
then increases rapidly outside that region. The idea is therefore to
use a polynomial that is the mapping of the Chebyshev polynomial
from $[-1,1]$ to $[1,m]$ and rescaled to be 1 at $x=0$:
\begin{deff}[The $C_m(x)$ polynomial]
  The $C_m(x)$ polynomial is a $\sqrt{m}$-order polynomial that is 
  defined by
  \begin{align}
  \label{def:Ch}
  \hat{C}_m(x) &\EqDef
    T_{\sqrt{m}}
      \left(\frac{x-\frac{m+1}{2}}{\frac{m-1}{2}}\right) \ , \\
    C_m(x) &\EqDef \frac{1}{\hat{C}_m(0)} \hat{C}_m(x) \ .
    \label{def:C}
\end{align}
\end{deff}
Figure~\ref{fig:Tn} shows $C_m(x)$ for $m=36$.
It is easy to verify that 1) $C_m(0)=1$, and 2)
$|C_m(x)|^2\le\frac{1}{9}$ for every $1\le x\le m$. The first claim
follows from definition, while the second follows from the fact that
$|\hat{C}_m(0)|\ge 3$, which follows from \Eq{eq:Tn-outside}:
\begin{align}
  |\hat{C}_m(0)| 
      &=\left|T_{\sqrt{m}}
      \left(-1-\frac{2}{m-1}\right)\right| \\
      & \ge 1 + \frac{2}{m-1}(\sqrt{m})^2
      \ge 3\ . \nonumber
\end{align}
\begin{figure*}
  \begin{center}
    \includegraphics[scale=0.8]{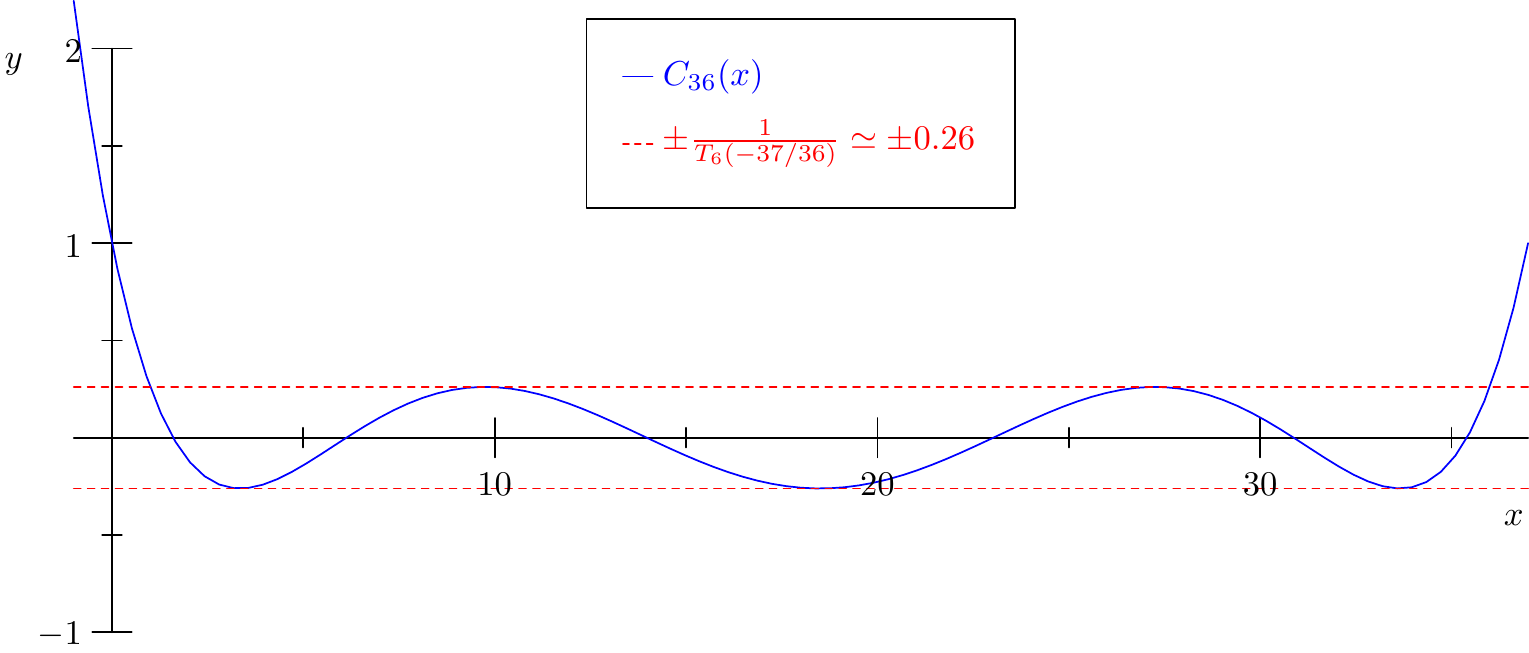}
  \end{center}  
    \caption{
    The polynomial $C_{36}(x)$, which is based on the Chebyshev
    polynomial of the first kind $T_6(x)$. For $x\in [1,36]$ we have
    $|C_{36}(x)|\le 1/3$.
    \label{fig:Tn} }
\end{figure*}

Notice that $C_m(x)$ is a polynomial of degree $\sqrt{m}$ -- a huge
improvement over the naive construction of degree $m$. In fact, it
can be shown that this construction is optimal: as shown in
\Ref{ref:poly-deg}, any polynomial that satisfies the above two
properties must be of at least degree $\sqrt{m}$.

To improve the shrinking factor, we can apply $[C_m(x)]^q$, where
$q$ is a parameter to be determined later.  We can now define the
operator $\hA$ that appears in the diluting lemma:
\begin{deff}[The Chebyshev-based operator $\hA$]
  Given integers $m$ and $q$, the Chebyshev-based operator $\hA$ is
  \begin{align}
    \label{def:KC}
    \hA \EqDef \hPi_m\cdot\Pi_{rest}\cdot\Pi_{odd} \ ,
  \end{align}
  where
  \begin{align}
  \label{def:hPm}
    \hPi_m \EqDef [C_m(\BN)]^q \ .
  \end{align}
\end{deff}
We note that $\hPi_m$ is a degree $j\EqDef q\sqrt{m}$ polynomial in
$\BN$, which leaves the zero-violations sector intact and shrinks
the violating sectors by at least $\delta=(1/9)^q$. Consequently, by
\Eq{def:hDel}, we get $\hDel = \Delta_0 + (1/9)^q$.  

Let us now turn to the task of upper bounding the SR that
is generated by $\hA^{\ell}$.

\subsection{Upper bounding the SR factor of $\hA^{\ell}$.}
\label{sec:SR}

We bound SR generated by $\hA^{\ell}$ in the following lemma:
\begin{lem}
\label{lem:mainSR}
  Set $j\EqDef q\sqrt{m}$, the degree of $[C_m(x)]^q$. Then for any
  state $\ket{\phi}$,
  \begin{align}
    \SR{\hA^\ell\phi} \le D_I \hD^\ell \cdot\SR{\phi} \ ,
  \end{align}
  where
  \begin{align}
    D_I &= D_0^m \ , \\
    \hD &= 20\cdot 2^{\frac{3}{2}\log j + \frac{1}{2}\log^2j}\cdot D_0^{j/m}
    \ .
  \end{align}
\end{lem}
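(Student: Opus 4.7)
The plan is to decompose $\hA^\ell$ into a controlled number of summands so that each summand has a cut inside $I_m$ at which its Schmidt-rank growth is at most $D_0^{j\ell/m}$, and then translate that bound to the cut of interest $i^*$ at a one-time penalty $D_I$.

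First, I would decompose $\hPi_m=P(\BN)$ using a recursive binary partition of $I_m$. At each level, splitting the current support into halves $I_L,I_R$ with $\BN=\BN_L+\BN_R$ and applying $(\BN_L+\BN_R)^t=\sum_{a+b=t}\binom{t}{a}\BN_L^a\BN_R^b$ (valid because the $X_i$'s commute) gives a sum of $O(j^2)$ tensor-product pieces across the middle cut. Recursing $\log m$ levels down to singleton variables and using $X_g^2=X_g$ at the leaves, every root-to-leaf path in the recursion tree yields a monomial $c_\alpha\prod_{g\in T_\alpha}X_g$ with $|T_\alpha|\le j$. This gives $\hPi_m=\sum_{\alpha=1}^N M_\alpha$ for some count $N$.

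Expanding $\hA^\ell=(\hPi_m\Pi_{rest}\Pi_{odd})^\ell$ via this decomposition produces $N^\ell$ tuple-terms $\prod_{s=1}^\ell M_{\alpha_s}\Pi_{rest}\Pi_{odd}$. For each tuple, set $n_c:=|\{s:c\in T_{\alpha_s}\}|$; since $\sum_{c\in I_m}n_c=\sum_s|T_{\alpha_s}|\le j\ell$, by pigeonhole some cut $c^*\in I_m$ has $n_{c^*}\le j\ell/m$. For such a cut $c^*$ (the cut through the two particles on which $P_{c^*}$ acts), $\Pi_{rest}$ is supported outside $I_m$ and hence a pure tensor across $c^*$, while the odd projections in $\Pi_{odd}$ nearest to $c^*$ lie entirely on one side of $c^*$, making $\Pi_{odd}$ too a pure tensor across $c^*$. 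Hence the only operators in the tuple crossing $c^*$ are the $X_{c^*}$ factors inside the $M_{\alpha_s}$'s with $c^*\in T_{\alpha_s}$, each multiplying the Schmidt rank by at most $D_0$ (Fact~\ref{f:SR}). So the Schmidt rank across $c^*$ is at most $D_0^{n_{c^*}}\le D_0^{j\ell/m}$, and translating to cut $i^*$ via Fact~\ref{f:SR} costs an additional factor $D_0^{|c^*-i^*|}\le D_0^m=D_I$. Summing over tuples,
\begin{align*}
  \SR{\hA^\ell\phi}\le N^\ell\cdot D_I\cdot D_0^{j\ell/m}\cdot\SR{\phi}=D_I\bigl(N\cdot D_0^{j/m}\bigr)^\ell\SR{\phi},
\end{align*}
matching the stated $\hD=N\cdot D_0^{j/m}$.

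The main obstacle is showing $N\le 20\cdot 2^{(3/2)\log j+(1/2)\log^2 j}=2^{O(\log^2 j)}$. The naive monomial expansion of $\hPi_m$ has $\binom{m}{\le j}=2^{O(\sqrt m\log m)}$ summands, which would be catastrophic since $\Pi_{odd}$ does not commute with the $X_i$'s so monomials cannot be merged across successive applications. Recursive grouping avoids this by capping the branching at every internal node by the local sub-polynomial's degree plus one (at most $j+1$, not $m+1$); across $\log m\le 2\log j$ levels (using $j\ge q\sqrt m$) the path count is $(j+1)^{\log m}=2^{O(\log^2 j)}$, and a careful tally of branching factors along each path yields the precise prefactor claimed.
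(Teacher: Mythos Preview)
Your overall strategy---decompose $\hA^\ell$ into pieces, locate for each piece a light cut inside $I_m$ by pigeonhole, then translate to the central cut at a one-time cost $D_I=D_0^m$---is exactly the paper's. The genuine gap is your count $N$.

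Carrying the binary recursion all the way down to singleton variables produces precisely the multinomial expansion of $P(X_1+\cdots+X_m)$: the leaves are in bijection with multi-indices $(e_1,\ldots,e_m)$ with $\sum_i e_i\le j$, so $N=\binom{m+j}{j}$, and even after the reduction $X_g^2=X_g$ you still have $\sum_{k\le j}\binom{m}{k}=2^{\Theta(\sqrt m\log m)}$ distinct monomials, not $2^{O(\log^2 j)}$. Your branching estimate is where this goes wrong: at depth $t$ a node carries $2^t$ exponents $(e_1,\ldots,e_{2^t})$, and splitting \emph{all} sub-intervals simultaneously branches by $\prod_i(e_i+1)$, which can reach $2^j$ once $2^t\ge j$---far more than the ``$j+1$ per node'' you assert. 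There is no decomposition of a single copy of $\hPi_m$ into $2^{O(\log^2 j)}$ pieces each of which is a tensor product across every cut outside a size-$\le j$ set.

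The paper avoids this by \emph{not} decomposing $\hPi_m$ in isolation and then reusing the pieces. It first writes $\hA^\ell$ as $(j+1)^\ell$ terms (one per degree in each layer), passes to the worst one $(\BN^j\Pi_{rest}\Pi_{odd})^\ell$, and then recurses on this $\ell$-layered object: split $\BN=\BN_L+\BN_R$, expand across all $\ell$ layers, and for each resulting term recurse \emph{only into the half whose total degree summed over the $\ell$ layers is $\le j\ell/2$}. The other half is left as an unexpanded polynomial blob; since it sits entirely on one side of the eventual cut, it does not affect the Schmidt rank there and never needs to be opened. Because only one side is opened at each step and its controlled total degree halves, the branching at step $t$ is at most $(j/2^t+1)^\ell$, and the product over $t=0,\ldots,\log j$ (plus a $4^\ell$ tail for the remaining $\log(m/j)$ steps) gives the stated $\hD^\ell$ with $\hD=20\cdot 2^{\frac{3}{2}\log j+\frac{1}{2}\log^2 j}D_0^{j/m}$. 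The adaptive, one-sided nature of this recursion---choosing which half to open based on degrees aggregated over all $\ell$ layers---is the missing idea.
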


\begin{proof}

Recall that $\hPi_m$ is a polynomial of degree $j$ in $\BN$ and
therefore can be written as a sum of $j+1$ terms:
\begin{align}
  \hPi_m = \sum_{i=0}^j c_i \BN^i \ .
\end{align}
Consequently, $\hA^\ell$ can be written as a superposition of $(j+1)^\ell$
terms of the form
\begin{align*}
 \BN^{i_\ell}\cdot(\Pi_{rest}\Pi_{odd})\cdots
 \BN^{i_2}\cdot(\Pi_{rest}\Pi_{odd})\cdot
 \BN^{i_1}\cdot(\Pi_{rest}\Pi_{odd}) \ ,
\end{align*}
with $i_1, \ldots, i_\ell$ between $0$ and $j$. When applied to a
general state $\ket{\phi}$, the term that potentially generates the
highest SR is $(\BN^j\cdot\Pi_{rest}\Pi_{odd})^\ell$. We will
therefore upper bound the total SR by upper bounding its SR and
multiplying the end result by $(j+1)^\ell$. 

\begin{deff}[A min-entangling operator]
  We say that an operator $C$ is min-entangling with respect to some
  cut in the support of $I_m$ if it is of the form
  \begin{align*}
     M_{i_\ell}\cdot(\Pi_{rest}\Pi_{odd})\cdots
     M_{i_2}\cdot(\Pi_{rest}\Pi_{odd})\cdot
     M_{i_1}\cdot(\Pi_{rest}\Pi_{odd}) \ ,
  \end{align*}
  where $M_i$ are polynomials in the projections of $I_m$ such that
  \begin{enumerate}
    \item There exists a subset of at most $j\ell/m$ of the $M_i$'s 
      that contains the projection $P \in I_m$
      that intersects with the cut.
      
    \item Each of the rest of the $M_i$ only contains projections that are
      either strictly to the left of $P$ or strictly to the right of
      $P$ (in $I_m$). 
 \end{enumerate}
\end{deff}

It follows that only those $M_i$ that contain the ``entangling''
projection $P$ increase the SR
across the cut, and therefore the total SR increase is bounded by
$D_0^{j\ell/m}$.  The bound on the SR follows from the following
decomposition of $\hA^{\ell}$:

\begin{claim}
\label{cl:mintangling}
 $(\BN^j\cdot\Pi_{rest}\Pi_{odd})^\ell$ can be written as a sum of
 at most $[4(j+1)\cdot (j/2+1)\cdot (j/4+1) \cdots (1+1)]^\ell$
 min-entangling operators. 
\end{claim}

\begin{proof}
  Say that an operator $C$ is $t$-min-entangling if there is a
  contiguous interval $I \subseteq I_m$ of $m/2^t$ projections such
  that $C$ is of the form 
  \begin{align*}
     M_{i_\ell}\cdot(\Pi_{rest}\Pi_{odd})\cdots
     M_{i_2}\cdot(\Pi_{rest}\Pi_{odd})\cdot
     M_{i_1}\cdot(\Pi_{rest}\Pi_{odd}) \ ,
    \end{align*}
  where a subset of at most $j\ell/2^t$ of the $M_i$'s are equal to
  $\BN_I\EqDef\sum_{i \in I} \Id-P_i$, while each of the other
  $M_i$'s is made of projections that are either to the left of $I$
  or to the right of $I$ (in $I_m$), and in particular does not
  include any projection from $I$.

  The proof relies on a recursive construction that after $t$ rounds
  decomposes $(\BN^j\cdot\Pi_{rest}\Pi_{odd})^\ell$ into a sum of
  $t$-min-entangling operators. After $\log m$ rounds of recursion
  we end up with the desired decomposition into min-entangling
  operators. 

  We begin by splitting $\BN$ into two terms: $\BN=\BN_L + \BN_R$,
  where $\BN_L\EqDef\sum_{i=1}^{m/2}(\Id-P_i)$ and
  $\BN_R\EqDef\sum_{i=m/2+1}^m (\Id-P_i)$.  Then $\BN^j = (\BN_L +
  \BN_R)^j = \sum_{i=0}^j \binom{j}{i}\BN_L^{j-i} \BN_R^i$.
  Expanding across all $\ell$ layers, we end up with $(j+1)^\ell$
  terms. Let us focus on one such term, and assume that at layer $i$
  it has the powers $\BN_L^{j-n_i}\BN_R^{n_i}$. Since the total
  degree in $\BN_L$ and $\BN_R$ across the $\ell$ layers is $j\ell$,
  it follows that the degree of one of the two across the $\ell$
  layers must be at most $j\ell/2$, and hence every term is
  1-min-entangling operator.  Consequently, $(\BN^j
  \Pi_{rest}\Pi_{odd})^\ell$ can be written as a sum of at
  $(j+1)^\ell$ 1-min-entangling operators.

  Proceeding recursively, we now pick one such term. It either
  contains at most $j\ell/2$ $\BN_L$ operators or at most $j\ell/2$
  $\BN_R$ operators. Assume without loss of generality it is
  $\BN_L$. We write $\BN_L = \BN_{LL} + \BN_{LR}$, and consequently,
  every $\BN_L^{n_i}$ can be written as the sum of $n_i+1$ terms:
  $\BN_L^{n_i} = \BN_{LL}^{n_i} + n_i\BN_{LL}^{n_i-1}\BN_{LR} +
  \ldots + \BN_{LR}^{n_i}$. Therefore, upon opening the product, we
  obtain a sum of $\prod_{i=1}^\ell (n_i+1)$ terms such that each
  term has either a maximal degree of $j\ell/4$ of $\BN_{LL}$ or of
  $\BN_{LR}$. These are all 2-min-entangling operators. Moreover,
  it is now easy to verify \emph{subject to the constraint}
  $\sum_{i=1}^\ell n_i \le j\ell/2$, the total number of terms
  $\prod_{i=1}^\ell (n_i+1)$ is maximized when all $n_i$ are equal:
  $\prod_{i=1}^\ell (n_i+1) \le (j/2+1)^\ell$. To summarize, we have
  just shown that $(\BN^j\Pi_{rest}\Pi_{odd})^\ell$ can be written
  as a sum of at most $(j+1)(j/2+1)$ 2-min-entangling operators.
  
  Continuing in this fashion for $\log j$ rounds we end up with a
  total of at most $[(j+1)\cdot (j/2+1)\cdot (j/4+1) \cdots
  (1+1)]^\ell$ $\log j$-entangling operators.  At this stage, the
  total degree of each such operator in $\BN_I$ is at most
  $j\ell/2^{\log j} = \ell$.  To bound the increase in the remaining
  $\log (m/j)$ rounds, we observe that subject to the constraint
  $\sum_{i=1}^\ell n_i \le \ell/2^k$ the expression
  $\prod_{i=1}^\ell (n_i+1)$ is maximized when $\ell/2^k$ of the
  $n_i$'s are $1$ and the rest $0$: $\prod_{i=1}^\ell (n_i+1) \leq
  2^{\ell/2^k}$. It follows that the total increase in the number of
  terms over these rounds is bounded by $2^\ell 2^{\ell/2}
  2^{\ell/4} \cdots \leq 4^\ell$.  This completes the proof of the
  claim. 
\end{proof}

We can now finish off the proof of the main lemma of this section.
Claim~\ref{cl:mintangling} gives a decomposition of the operator
$(\BN^j\cdot\Pi_{rest}\Pi_{odd})^\ell$ as a sum of at most
$[4(j+1)\cdot (j/2+1)\cdot (j/4+1) \cdots (1+1)]^\ell$
min-entangling operators. As noted above, to apply this result to
$\hA^\ell$ we must further multiply this number by $(j+1)^\ell$, so
$\hA^\ell$ can be written as a sum of no more than $[4(j+1)^2\cdot
(j/2+1)\cdot (j/4+1) \cdots (1+1)]^\ell$ terms. It is easy to verify
that for $j\ge 2$ (which is always the case), 
  \begin{align*}
    &4(j+1)^2\cdot (j/2+1)\cdot (j/4+1) \cdots
    (1+1) \\
    &\le 20 j^{3/2}2^{\frac{1}{2}\log^2 j} \ .
  \end{align*}

  The SR contribution of each such min-entangling operator \emph{at
  that cut that passes through its diluted column} is at most
  $D_0^{j\ell/m}$, and since this column is at most $m$ particles
  away from the bi-partitioning cut, it follows from Fact~\ref{f:SR}
  that its SR contribution to the bi-partitioning cut is at most
  $D_0^{j\ell/m}\cdot D_0^m$. Therefore $\SR{\hA^\ell \ket{\phi}}
  \le D_I\hD^\ell\cdot\SR{\phi}$, with 
  \begin{align}
  \label{eq:pre-DI}
     D_I &= D_0^m \ , \\
     \hD &= 20 j^{3/2}2^{\frac{1}{2}\log^2 j}D_0^{j/m} \ .
     \label{eq:pre-D}
  \end{align}
  This concludes the proof of \Lem{lem:mainSR}.
\end{proof}

At this point it is worthwhile to pause and take inventory of where
we are. We have shown that the operator $\hA^\ell$ is an AGSP with
characteristic factors $\{D_I \hD^{\ell}, (\Delta _0 + 
(1/9)^q)^{\ell}\}$.  We are searching for an AGSP whose product of
characteristic factors is below $1/2$.  This will be the case in our
situation for some $\ell$ as long as the product
$\hD\cdot[\Delta_0+(1/9)^q]$ is less than one.  Our work so far has
shown
\begin{align}
  \hD\cdot[\Delta_0 +(1/9)^q]=20j^{3/2}2^{\frac{1}{2}\log^2 j}D_0^{j/m} 
   \big[\Delta_0 + (1/9)^q\big] \ .
\end{align}
Recalling that $j=q\sqrt{m}$ it is clear that we can find constants
$q,m$ (in $D_0$) such that $\hD\cdot(1/9)^q < 1/2$; however the term
$\hD\cdot\Delta_0$ may still be bigger than $1$. 

To solve this problem another idea is needed: coarse-graining. As we
shall see in the next subsection, fusing together $k$ adjacent
particles allows us to move to a new local Hamiltonian system with
$D_0, \Delta_0$ replaced by $D_0^k, \Delta_0^k$. Taking $k
=\orderof{q\log \Delta^{-1}_0}$, would then yield $\Delta_0^k \simeq
(1/9)^q$, and consequently lead to $\hD\cdot\hDel < 1/2$.

\subsection{Coarse-graining}

\begin{figure}
  \begin{center}
    \includegraphics[scale=0.8]{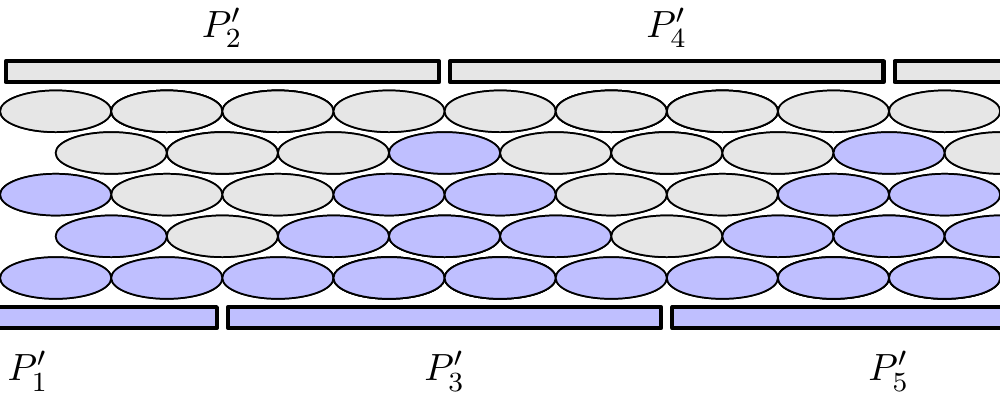}
  \end{center}  
    \caption{
    An illustration of a $k$-coarse grained system in 1D with $k=4$.
    The elongated rectangles denote the coarse-grained projections
    and the ovals denote the original projections. Underneath even
    coarse-grained projections, one can ``pull'' a pyramid of
    original projections, and similarly above an odd coarse-grained
    projection. Together, they form $k$ layers of the original
    projections. This shows that the coarse-grained shrinking
    exponent is actually $\Delta_0' = \Delta_0^k$.    
    \label{fig:coarse-grain} }
\end{figure}

Consider a $k$-coarse-grained system, in which we fuse together $k$
adjacent particles, making them a single particle of dimension
$d^k$. The new Hamiltonian of the system would now be a 2-local
Hamiltonian on a chain, consisting of projections $Q'_i = \Id-P'_i$,
where $P'_i$ denotes the projection into the common ground space of
the $2k$ particles that form the coarse grained particles $i$ and
$i+1$. We define the odd/even layers projections $\Pi'_{odd},
\Pi'_{even}$ accordingly, and notice that every application of
$\Pi'_{even}\Pi'_{odd}$ increases the SR by at most a factor of
$D_0^{k}$. To estimate the shrinking factor, we may use the DL on
the new Hamiltonian using the new spectral gap.

However, there is a much stronger bound that we can obtain by using
the DL on the \emph{original system}. Based on an idea 
that has already appeared in a related form in \Ref{ref:DL2},
we show that:
\begin{claim}
  \begin{align*}
    &\Pi'_{odd}\Pi'_{even} \\
    &= \Pi'_{odd} \cdot
    (\Pi_{even}\Pi_{odd}\Pi_{odd}\Pi_{even})^{k/2} \cdot \Pi'_{even} \ .  
  \end{align*}
\end{claim}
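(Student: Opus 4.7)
I would establish the identity by decomposing the middle operator $(\Pi_{even}\Pi_{odd}\Pi_{odd}\Pi_{even})^{k/2}$ into pieces that each get absorbed into either $\Pi'_{odd}$ on the left or $\Pi'_{even}$ on the right, so that the entire middle collapses. The basic absorption step is: for any original projection $P_j$ whose support lies within the support of some $P'_i$ appearing in $\Pi'_{odd}$, one has $P_j\Pi'_{odd}=\Pi'_{odd}P_j=\Pi'_{odd}$. Indeed, $P_j$ commutes with the other factors of $\Pi'_{odd}$ (which sit on disjoint supports, being in the same coarse-grained layer), while the range of $P'_i$ equals, by frustration-freeness, the common ground space of the originals inside $P'_i$, on which $P_j$ acts as the identity. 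The analogous property holds for $\Pi'_{even}$, and by iterating, for any product of such originals.

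\textbf{Pyramid decomposition and peeling.} I would first use $\Pi_{odd}^2=\Pi_{odd}$ and $\Pi_{even}^2=\Pi_{even}$ to rewrite the middle operator as $\Pi_{even}(\Pi_{odd}\Pi_{even})^{k/2}$, an alternating product of $k+1$ layer projections containing originals from the whole chain. I would then reorganize these originals into the pyramid families of \Fig{fig:coarse-grain}: beneath each even coarse-grained projection and above each odd one I stack a pyramid of $k/2$ rows of commuting originals, of alternating parity and shrinking widths, each row lying entirely within the support of its host coarse-grained projection. Adjacent odd/even pyramids meet at the coarse-grained boundary and jointly contribute $k$ rows there, accounting for the $k$ effective alternating layers in the simplified middle. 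With the pyramids in place, I peel rows from the outside in: each row, being a product of commuting originals inside a single coarse-grained support, is absorbed into its host $P'_i$ via the absorption step, alternately from the left (into $\Pi'_{odd}$) and from the right (into $\Pi'_{even}$). After every row has been peeled away, only $\Pi'_{odd}\Pi'_{even}$ remains.

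\textbf{Main obstacle.} The hard part is the geometric bookkeeping of the pyramid assembly. One must verify that, after using layer-internal commutativity (projections within a single $\Pi_{odd}$ or $\Pi_{even}$ commute freely) to reorder originals within a layer, and using layer-projection idempotence to introduce duplicate copies of originals shared between adjacent pyramids at a coarse-grained boundary, the union of all pyramid rows reproduces $\Pi_{even}(\Pi_{odd}\Pi_{even})^{k/2}$ exactly; the doubled form $\Pi_{odd}\Pi_{odd}$ and $\Pi_{even}\Pi_{even}$ in the statement is precisely what creates the room for these duplications. A convenient way to organize this verification is by induction on $k$: the base case $k=2$ is a direct pyramid absorption across a single odd/even pair, and the inductive step strips the outermost row of every pyramid and reduces to the case $k-2$.
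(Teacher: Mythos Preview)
Your proposal is correct and follows essentially the same approach as the paper: the paper also uses the absorption property (``pulling'' original projections out of a coarse-grained $P'_i$, which is the inverse of your absorption step) together with the same pyramid/light-cone decomposition of \Fig{fig:coarse-grain}, then observes that the adjacent odd/even pyramids commute past each other and reassemble into the alternating layers. The only cosmetic difference is direction---the paper starts from $\Pi'_{odd}\Pi'_{even}$ and pulls pyramids out, whereas you start from the right-hand side and peel pyramids in---and you add an inductive organization of the bookkeeping that the paper leaves to the figure.
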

This immediately implies:

\begin{corol} 
\label{cor:rc-coarse} 
  If $H'=\sum_i H'_i$ is the $k$-coarse-grained version of $H=\sum_i
  H_i$, then the SR and shrinking factors of the DL AGSP of $H'$ are
  related to those of $H$ by
  \begin{align}
    (D'_0, \Delta'_0) = (D_0^k, \Delta_0^k) \ .
  \end{align}  
\end{corol}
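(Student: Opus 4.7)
The plan is to prove the identity via a systematic absorption argument built on two preparatory facts about the geometry of the coarse-grained projectors. First I will establish the absorption rules, then carry out an outside-in reduction of $M_k:=(\Pi_{even}\Pi_{odd}\Pi_{odd}\Pi_{even})^{k/2}$.

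The key geometric observation is that after $k$-coarse-graining, every original $2$-local projection $P_j$ has its support entirely inside the support of some odd coarse-grained projection $P'_{2i-1}$ (call this ``O-type'') or some even $P'_{2i}$ (``E-type''), with most bulk projections being both. This holds because adjacent odd and even coarse-grained supports overlap on $k$ original particles, comfortably fitting any $2$-local $P_j$ that does not straddle the boundary between two consecutive overlap regions. Combined with frustration-freeness (which makes $\mathrm{range}(\Pi'_{odd})\subseteq \ker Q_j$ for every O-type $j$), one derives the absorption identities $P_j\Pi'_{odd}=\Pi'_{odd}P_j=\Pi'_{odd}$ for O-type $P_j$, and symmetrically for E-type and $\Pi'_{even}$. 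A crucial consequence is that every projection in $\Pi_{odd}$ is O-type (each odd-indexed $P_{2l-1}$ lies inside some odd coarse-grained block), and in the bulk every such $P_{2l-1}$ also sits in an even coarse-grained overlap and is therefore E-type as well. Thus $\Pi_{odd}$ commutes with \emph{both} $\Pi'_{odd}$ and $\Pi'_{even}$.

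With these tools the identity is proved by induction on $k$. Using $\Pi_{odd}^2=\Pi_{odd}$ and $\Pi_{even}^2=\Pi_{even}$, the operator $M_k$ simplifies to an alternating product $\Pi_{even}\Pi_{odd}\Pi_{even}\cdots\Pi_{even}$ of $k+1$ factors. Split each outer $\Pi_{even}$ layer as $\Pi_{even}^{(O)}\Pi_{even}^{(E)}$; the two pieces commute, living in the same layer. Absorb the outermost $\Pi_{even}^{(O)}$ into the adjacent $\Pi'_{odd}$ on the left and the outermost $\Pi_{even}^{(E)}$ into $\Pi'_{even}$ on the right. The now-outermost $\Pi_{odd}$ factors are then absorbed using the bulk commutation with the boundary projectors, which collapses the expression into a shorter instance corresponding to $k-2$, closing the induction.

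The main obstacle will be handling the residual ``wrong-type'' fragments of the outer $\Pi_{even}$ layers---specifically the E-only piece of the leftmost $\Pi_{even}$ (which does not absorb directly into $\Pi'_{odd}$) and its O-only mirror on the right. These must be routed through the interior via the $\Pi_{odd}$ layers, whose bulk commutation with both boundary projectors provides the required slack. The $k/2$-power of the $4$-layer block $(\Pi_{even}\Pi_{odd}\Pi_{odd}\Pi_{even})$ is engineered precisely so that this routing closes in $k/2$ rounds, matching the $k$-particle overlap width of the coarse-graining and ultimately yielding the $\Delta_0^k$ shrinking factor claimed in Corollary~\ref{cor:rc-coarse}.
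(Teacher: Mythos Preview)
Your absorption framework is sound---frustration-freeness does give $P_j\Pi'_{odd}=\Pi'_{odd}$ whenever the support of $P_j$ lies inside an odd coarse-grained block, and this is exactly the mechanism the paper exploits. But the outside-in induction does not close, and the obstacle you flag in your last paragraph is fatal rather than merely technical. After absorbing the O-type factors of the leftmost $\Pi_{even}$ into $\Pi'_{odd}$, the residue $\Pi_{even}^{(\neg O)}=\prod_i P_{2ik}$ consists of projections straddling odd-block boundaries. To dispose of it you must commute it through the interior $\Pi_{odd}$ layers toward $\Pi'_{even}$, but $P_{2ik}$ overlaps $P_{2ik\pm 1}\in\Pi_{odd}$ and does not commute with them. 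The commutation of $\Pi_{odd}$ with the boundary projectors $\Pi'_{odd},\Pi'_{even}$ that you invoke moves the wrong object: it lets you slide $\Pi_{odd}$ past $\Pi'_{even}$, not past the stranded $\Pi_{even}^{(\neg O)}$. Already for $k=2$ one is stuck at $\Pi'_{odd}\,\Pi_{even}^{(E)}\,\Pi_{odd}\,\Pi_{even}^{(O)}\,\Pi'_{even}$ with no further legal moves. (Separately, for odd $k$ the premise that bulk odd projections are all E-type already fails: $P_{3k}$ is odd-indexed yet straddles the boundary between the supports of $P'_2$ and $P'_4$.)

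The paper runs the argument in the opposite direction and thereby avoids the routing problem entirely. Rather than starting with full layers and absorbing them, it starts from $\Pi'_{odd}\Pi'_{even}$ and \emph{pulls out} from each $P'_i$ a pyramid $\mathrm{Pyr}_i$: the light-cone of original projections whose supports lie inside that of $P'_i$. Each pyramid is absorbed for free, giving $\Pi'_{odd}=\Pi'_{odd}\cdot\mathrm{Pyr}_1\mathrm{Pyr}_3\cdots$ and $\Pi'_{even}=\mathrm{Pyr}_2\mathrm{Pyr}_4\cdots\Pi'_{even}$. The geometric point is that adjacent odd and even pyramids have complementary triangular shapes that interlock; their product can be rearranged (commuting only non-overlapping projections) into $k$ full alternating layers. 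This delivers the identity in one step, with no induction and no stranded fragments to route.
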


The claim follows from two observations:
\begin{enumerate}
  \item For any coarse-grained constraint $P'_i$, we can always ``pull'' 
    from it a product of the original projections $P_i$ that act on
    the support of $P'_i$. The reason being that $P'_i$ projects
    into the common ground space of all the original $2k$ particles
    in its support, and the original projections $P_i$ are trivial
    on that space.

  In particular, let $Pyr_i$ denote the ``pyramid'' or
  ``light-cone'' of original projections pulled from $P'_i$ as in
  the figure above. Then $\Pi'_{odd} =
  \Pi'_{odd}\cdot\Pyr_1\cdot\Pyr_3\cdot\Pyr_5\cdots$, and
  $\Pi'_{even} = \Pyr_2\cdot\Pyr_4\cdot\Pyr_6\cdots\Pi'_{even}$.
  
  \item
    $\Pyr_1\cdot\Pyr_3\cdot\Pyr_5\cdots\Pyr_2\cdot\Pyr_4\cdot\Pyr_6\cdots =
    (\Pi_{odd}\Pi_{even})^{k+1}$.
    
  This is due to their pyramid-like shape, they can be commuted past
  each other and re-arranged as $k+1$ layers of the original
  projections (see the figure above). We note that applying $k+1$
  original layers of the DL shrinks\footnote{At first sight it seems
  that the shrinking should be of $\Delta_0^{(k+1)/2}$, but actually
  we can do much better by ``duplicating'' the projections in the
  middle. For example, for $k=3$, we get
  $\Pi_{even}\Pi_{odd}\Pi_{even}\Pi_{odd} =
  (\Pi_{even}\Pi_{odd})(\Pi_{odd}\Pi_{even})(\Pi_{even}\Pi_{odd})$.
  Every bracket contributes a $\Delta_0$ factor, so overall we get
  $\Delta_0^3$.} the perpendicular space by a factor of
  $\Delta_0^k$.
\end{enumerate}

\subsection{Gluing  it all together}

If we first $k$-coarse-grain the system, and then
construct $\hA$, we obtain the following factors:
\begin{align}
\label{eq:free-DI}
  D_I &= D_0^{km} \ , \\
  \hD &= j^{3/2}2^{\frac{1}{2}\log^2 j}D_0^{kj/m} \ , \label{eq:free-D}\\
  \hDel &= \Delta_0^k + (1/9)^q \ .
  \label{eq:free-Delta}
\end{align}
We have 3 free parameters: $m, q, k$ (recall that $j=q\sqrt{m}$). To
finish the proof we show how these can be chosen to obtain
$\hD\cdot\hDel<1/2$.

Our first step is to demand that $k$ is large enough such that
$\Delta_0^k\le(1/8)^q$. Looking at \Eq{eq:shrink}, it is easy to
verify that as long as $\epsilon\le 10$, 
this can achieved by defining, for example, 
\begin{align}
  \label{eq:fix-k}
  k\EqDef \frac{20q}{\epsilon} \ .
\end{align}
Then $\hDel \le 2(1/8)^q$.  A sufficient condition for
$\hD\cdot\hDel<1/2$ is therefore
\begin{align}
  \label{eq:main}
  j^{3/2}2^{\frac{1}{2}\log^2j}
   D_0^{\frac{1}{\epsilon}20qj/m}2^{-3q} \le \frac{1}{80} \ , 
\end{align}
or equivalently, 
\begin{align}
\label{eq:1d-log}
  \frac{3}{2}\log j + \frac{1}{2}\log^2j
    + 40X\frac{j}{m} q - 3q
    \le -\log(80) \ .
\end{align}
where we used the definition $X\EqDef \frac{\log
d}{\epsilon}=\frac{\log D_0}{2\epsilon}$ (see \Thm{thm:main}).

To satisfy this equation, we demand that $40X\frac{j}{m} \le 2$ so
that the leading term in the LHS of \Eq{eq:1d-log} would be $-q$.
Substituting $j=q\sqrt{m}$, leads us to define
\begin{align}
  \label{eq:fix-m}
  m\EqDef (20qX)^2 \ .
\end{align}
Going back to \Eq{eq:1d-log}, we now have to satisfy
\begin{align}
  \frac{3}{2}\log j + \frac{1}{2}\log^2j -q \le -\log(80) \ .
\end{align}
where now $j=q\sqrt{m}=20Xq^2$. 

It is easy to see that this equation can always be satisfied for
large enough $q$, since the logarithmic factors are weaker than the
$(-q)$ factor. A straightforward analysis yields
\begin{align}
  q \le \orderof{\log^2 X} \ ,
\end{align}
and therefore
\begin{align}
  \log(\hDel^{-1}) = 3q-1 \le \orderof{\log^2 X} \ ,
\end{align}
and
\begin{align}
  m &= (20Xq)^2 \le \orderof{X^2\log^4 X} \ .
\end{align}

Then $D_I = D_0^{km}$, and as $\log D_0^k =
40qX=\orderof{X\log^2X}$, we get
\begin{align}
  \log D_I = m\log D_0^k = \orderof{X^3\log^6 X} \ .
\end{align}

This concludes the proof of \Lem{lem:diluting}.

\section{2D and beyond}
\label{sec:2D}

Can \Thm{thm:main} be extended to the 2D case and beyond? Currently,
we do not have such a proof. Nevertheless, it is possible to make
a small step in the right direction, as we sketch in this section.

For the sake of clarity, we will restrict ourselves to the 2D case,
and consider the case where the bi-partitioning of the system is
along one dimension, and the length of the boundary is $I$, as
depicted in \Fig{fig:2D-settings}. In terms of the discussion in the
Introduction, $I=|\partial L|$, the area of the separating surface
between $L$ and $\overline{L}$.
\begin{figure}
  \begin{center}
    \includegraphics[scale=1]{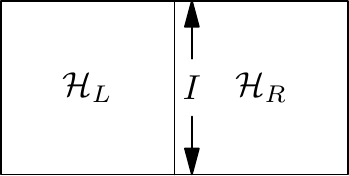}
  \end{center}  
  \caption{\label{fig:2D-settings} A simple settings for a 2D case.
  $I$ is the length of the boundary.}
\end{figure}

To prove an area-law for this system, one would like to show that
the von-Neumann entropy between the two parts of the system satisfies
$S\le \orderof{1}\cdot I$. 

A straightforward approach for obtaining a bound on $S$, which was
also mentioned in \Ref{ref:Has07}, is to treat the 2D system as a 1D
system by considering the particles along a column as a huge
particle of dimension $d^I$. Then to get a bound from \Thm{thm:main}
we replace $d\mapsto d^{I}$, or equivalently, $X\EqDef\frac{\log
d}{\epsilon}\mapsto I\cdot X$. This gives us $S \le \orderof{1}
\cdot (I\cdot X)^3\log^8(I\cdot X)$.

The above derivation completely failed to take into account the
local aspects of the problem along the cut. We now show how one can
make use of it to drop the leading power of $I$ from $I^3$ to $I^2$,
and get a bound of $S\le \orderof{1}\cdot
I^2\cdot X^3\log^8(I\cdot X)$. 

We first note that up to some unimportant geometrical factors, the DL
works also in 2D (and in any dimension for that matter). See
\Ref{ref:DL2} for more details. For simplicity, let us assume that
also in the present case we have only two layers and it is only one
layer that increases the SR with respect to the cut, which we will
still refer to as the ``even layer''. The idea is then to mimic the
1D case and replace a segment of $m$ columns around the cut, which
we denote by $I_m$ with the operator $\hPi_m$. The difference is
that now $I_m$ contains $m\cdot I$ projections instead of $m$.
Therefore, the polynomial that we would use would be
$[C_{mI}(x)]^q$, where $C_{mI}(x)$ is based on the Chebyshev
polynomial of degree $\sqrt{mI}$. Just as in the 1D case, the
shrinking factor $\hDel$ of $\hA$ is given by $\hDel = \Delta_0 +
(1/9)^q$.

What is the SR exponent $\hD$ of this construction? A very similar
analysis to that in \Sec{sec:SR} can be done here: we recursively
divide $I_m$ by cutting it in \emph{parallel} to the cut. Just as in
the 1D case, there are $m$ such possible cuts in $I_m$.  The
difference is that now the SR contribution of the restriction of
$\BN$ to a certain cut is not $D_0$ but $\orderof{1}\cdot I\cdot
D_0$, because $\BN$ contains a \emph{sum} of $\orderof{1}\cdot I$
projections along a cut that is parallel to the boundary. Therefore,
the overall SR after $\ell$ layers is:
\begin{align}
  \SR{\hA^\ell \ket{\phi}} \le D_I \cdot \hD^\ell \ ,
\end{align}
with
\begin{align}
  \hD &\EqDef 20j^{3/2}2^{\frac{1}{2}\log^2 j} (I D_0)^{j/m} \ , \\
  D_I &\EqDef (D_0^{mI}) \ ,
\end{align}
and $j\EqDef q\sqrt{m I}$. In comparison with the 1D indices in
Eqs.~(\ref{eq:pre-DI}, \ref{eq:pre-D}), we see that $D_0 \mapsto
D_0^{I}$ in the formula for $D_I$, but $D_0\mapsto I D_0$ in the
formula for $\hD$ -- an exponential saving in the latter. 

From here, the analysis follows essentially the same steps as the 1D
case. We perform an initial $k$-coarse-graining to
drive $(D_0, \Delta_0) \mapsto (D_0^k, \Delta^k_0)$. Demanding that
$\Delta_0^k \le (1/8)^q$, we set $k\EqDef 20q/\epsilon$ as in
\Eq{eq:fix-k}. Then the analogue of \Eq{eq:1d-log} is
\begin{align}
\label{eq:2d-log}
    \frac{3}{2}\log j + \frac{1}{2}\log^2j
    + (\log I+40Xq)\frac{j}{m} - 3q
    \le -\log(80) \ ,
\end{align}
and demanding that $(\log I+20Xq)\frac{j}{m} \le 2q$, leads us to
set $m\EqDef (\frac{1}{2}\log I + 20Xq)^2 \cdot I$. Plugging this
back to \Eq{eq:2d-log} yields $q\le\orderof{1}\cdot
\log^2(I\cdot X)$, and so $m\le \orderof{1}\cdot I\cdot
X^2\log^4(I\cdot X)$. This results in $\log D_I \le \orderof{1}\cdot
I^2\cdot X^3\log^6(I\cdot X)$, and $\log \hD \le \orderof{q} \le
\orderof{1}\cdot\log^2(I\cdot X)$, which brings us to
\begin{align}
  S \le \orderof{1}\cdot I^2 \cdot X^3\cdot \log^8(I\cdot X) \ .
\end{align}

\section{Matrix Product States}
\label{sec:MPS}

We briefly sketch the implications of these results for the
efficient approximation of the ground state via Matrix Product
States.  Specifically, \Thm{thm:main} implies the following
corollary
\begin{corol}
  Under the same conditions of \Thm{thm:main}, for any integer $k>0$
  there exists a matrix product state (MPS) $\ket{\psi_k}$ with bond
  dimension $k$ such that 
  \begin{align}
  \label{eq:MPS-approximation}
    \norm{ \gs - \psi_k}^2 
      \le 2^{\orderof{1}\cdot X^3\log^8X}(n/k) \ .
  \end{align}
\end{corol}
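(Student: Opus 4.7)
The plan is to combine two ingredients: a per-cut Schmidt-tail estimate extracted directly from the AGSP machinery of Lemma~\ref{lem:diluting}, and the standard left-to-right sweep-and-truncate procedure that assembles per-cut truncations into a matrix product state.

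For the per-cut estimate, I would fix a bipartitioning cut and invoke Lemma~\ref{lem:diluting} to obtain a $(D,\Delta)$-AGSP $K$ with $D\cdot\Delta\le 1/2$ and $\log D \le \orderof{1}\cdot X^3\log^8 X$, together with a product state $\ket{\phi}$ from Lemma~\ref{lem:mu1} satisfying $\mu\EqDef\braket{\gs}{\phi}\ge 1/\sqrt{2D}$. Decomposing $\ket{\phi}=\mu\ket{\gs}+(1-\mu^2)^{1/2}\ket{\gs^\perp}$ and iterating $K$ produces a vector $K^\ell\ket{\phi}$ of Schmidt rank at most $D^\ell$ whose perpendicular component has squared norm at most $(1-\mu^2)\Delta^\ell$. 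Applying Fact~\ref{f:rankapprox} to the normalization of $K^\ell\ket{\phi}$ bounds the Schmidt coefficients $\lambda_1 \ge \lambda_2 \ge \ldots$ of $\ket{\gs}$ at this cut:
\begin{align*}
  \sum_{j>D^\ell}\lambda_j^2 \le \frac{(1-\mu^2)\Delta^\ell}{\mu^2}\le 2D\,\Delta^\ell.
\end{align*}
Because $D\cdot\Delta\le 1/2$, taking $\ell\approx\log_D k$ yields $\Delta^\ell\le 1/k$, so the per-cut squared tail is bounded by $C/k$ with $C\EqDef 2^{\orderof{1}\cdot X^3\log^8 X}$ (for $k<D$ the bound holds trivially by enlarging $C$).

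To convert this into a global MPS approximation I would perform a left-to-right canonical sweep on $\ket{\gs}$, truncating each bond to dimension $k$. The resulting state $\ket{\psi_k}$ is a bond-$k$ MPS, and the standard additive-error analysis for sweep truncation gives
\begin{align*}
  \norm{\gs-\psi_k}^2 \le \orderof{1}\cdot\sum_{i=1}^{n-1}\Big(\sum_{j>k}(\lambda_j^{(i)})^2\Big),
\end{align*}
where $\lambda_j^{(i)}$ is the Schmidt spectrum of $\ket{\gs}$ at cut $i$. Substituting the per-cut tail bound yields $\norm{\gs-\psi_k}^2\le\orderof{1}\cdot n\cdot C/k$, as claimed.

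The essential technical content lies in the per-cut tail estimate; once this is in hand, the rest is routine. The one point that requires care is the sweep-truncation bound itself: namely, that the truncation errors add in squared norm rather than in norm (which would cost an extra factor of $n$ by Cauchy--Schwarz). This follows from the fact that the Schmidt spectra of the intermediate states encountered during the sweep are dominated by those of the original $\ket{\gs}$, so that every local truncation error can be charged against a tail of $\ket{\gs}$ itself.
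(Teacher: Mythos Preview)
Your proposal is correct and follows essentially the same route as the paper: derive the per-cut Schmidt-tail bound $\sum_{j>D^\ell}\lambda_j^2\le 2D\Delta^\ell$ from the AGSP of \Lem{lem:diluting} together with \Lem{lem:mu1} and Fact~\ref{f:rankapprox}, convert it via $D\cdot\Delta\le 1/2$ into a $2D/k$ bound, and then cite Vidal's sweep-and-truncate construction to accumulate the $n$ per-cut errors additively in squared norm. Your closing remark about why the truncation errors add in squared norm is a welcome elaboration of what the paper leaves to the citation.
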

This result follows from first noting that across \emph{any} cut, we
can use the properties of the AGSP $K$ to bound the norm of the tail
of the Schmidt coefficients: $\sum_{i\ge D^\ell} \lambda_i^2 \le
\frac{1}{\mu^2}\Delta^\ell \le 2D\Delta^\ell$, where $\mu$ is the
overlap of the initial product state with the ground state (see
\Lem{lem:mu1} in \Sec{sec:proof} and the proof of \Lem{lem:S} in the
appendix ). Then letting $k\EqDef D^\ell$, we find that if we
truncate the Schmidt coefficient of a given cut at $k$, we introduce
an error of $\delta=\sum_{i\ge k}\lambda_i^2 \le \frac{2D}{k}$,
where we used the fact that $D\cdot\Delta \le 1/2$.  Then applying
the MPS construction procedure of Vidal~\cite{ref:Vid04a}, and
truncating the SR across all cuts to $k$, yields an MPS of bond
dimension $k$ that approximates the ground state to within the
accumulated error $n\delta = \frac{2nD}{k}$. Finally, recalling from
\Eq{eq:D-bound} in \Lem{lem:diluting} that $\log D\le
\orderof{1}\cdot X^3\log^8X$ gives \Eq{eq:MPS-approximation}.

\section{Conclusions}
\label{sec:Conclusions}

In conclusions we have given a new proof of the area-law for 1D,
gapped and frustration-free systems. The proof uses the DL and the
Chebyshev polynomials to upper bound the entropy by
$\orderof{1}\cdot X^3\log ^8 X$, for $X=\frac{\log d}{\epsilon}$,
which is exponentially better than the bound in \Ref{ref:Has07}. It
brings us much closer to the recent lower bound of
$\orderof{1}\cdot\epsilon^{1/4}$ (for fixed $d$) of Hastings and
Gottesman \cite{ref:lowS-2} and Irani \cite{ref:lowS-1}.

There are two immediate directions in which one might hope to
improve this result. First, it is seems very plausible that the
proof can be extended to the frustrated case. Indeed, already in
Hastings' 1D proof \cite{ref:Has07}, one of the first steps of the
proof is to reduce the frustrated system into an almost
frustration-free system by a coarse-graining procedure. It is
possible that a similar technique can be also deployed here.
Moreover, one might try to take a more direct approach and construct
the AGSP directly from the Hamiltonian $H$, by replacing $\BN$ with
$H$. Both are sums of projections, and a similar SR analysis can
be performed on operators of the form $\poly(H)$. 

The second direction, which is much more interesting, and probably
more difficult, is to try to generalize the area-law for 2D and
beyond. In fact, any sub-volume law for 2D would be an extremely
interesting result. One possibility is to improve the $\log d$
dependence of the bound in \Eq{eq:S}. A bound linear in $\log d$
would imply an area-law in all dimensions, whereas anything like
$(\log d)^{2-\delta}$ would imply a sub-volume law for low
dimensions. However, it seems to us that the right approach is to
better understand and exploit the locality of the problem in the
direction \emph{parallel} to the cut. This was done in a very weak
way in \Sec{sec:2D}, and led to an improved bound in the 2D case. We
believe there are more local aspects of the problem that can be
used. For example, in the current AGSP construction we do not assume
anything about the underlying distribution of violations. Yet, these
arise from a very specific local operation, namely the application
of the previous AGSPs. If one could show that the distribution of
violations decays exponentially in $k$, it may be enough
to use a Chebyshev polynomial of a degree smaller than $\sqrt{m}$ --
which may lead to an area law. More generally, one might want to prove
some notion of independence, or decay of correlation, along the cut,
thereby reducing the 2D problem to a stack of nearly independent 1D
problems.

\section{Acknowledgments}
\label{sec:Acknowledgements}

We are grateful to Dorit Aharonov, Matt Hastings, Sandy Irani,
Tobias Osborne and Bruno Nachtergaele for inspiring discussions
about the above and related topics.  Itai Arad acknowledges support
by Julia Kempe’s ERC Starting Grant QUCO and Julia Kempe’s
Individual Research Grant of the Israel Science Foundation (grant
No. 759/07). Zeph Landau and Umesh Vazirani were supported in part
by ARO grant W911NF-09-1-0440 and NSF Grant CCF-0905626.

%
%

\appendix*


\section{Bounding the entropy of a steps-like probability distribution}
\label{sec:steps}

\begin{proof}[\ of \Lem{lem:S}]

Let $\ket{\gs}=\sum_{i\ge 1} \lambda_i \ket{L_i}\otimes\ket{R_i}$ be
the Schmidt decomposition of the ground state $\ket{\gs}$, and let
$\ket{\phi} = \ket{L}\otimes\ket{R}$ a product state such that 
$|\braket{\phi}{\gs}|=\mu$. Define the sequence of states
$\ket{v_\ell}$ to be the \emph{normalization} of the vectors
$K^{\ell} \ket{\phi}$. Since $K$ is a $(D,\Delta)$-AGSP, 
it follows that $\ket{v_\ell}$
has the following properties:
\begin{enumerate}
  \item  $\SR{v_\ell} \le D^{\ell}$. 
  \item $|\braket{{v_\ell}}{\gs}| 
    \ge \frac{ \mu}{\sqrt{\mu^2 + \Delta^\ell(1-\mu^2)}}$.
\end{enumerate}
Then by Fact~\ref{f:rankapprox} (Eckart-Young theorem),
\begin{align*}
  \sum_{i>D^\ell} \lambda_i^2
   \le  \frac{\mu^2}{\mu^2 + \Delta^\ell(1- \mu^2)}
   \le 1-\frac{\mu^2}{\mu^2+\Delta^\ell}
   \le \frac{1}{\mu^2}\Delta^\ell \EqDef p_\ell\ .
\end{align*}
We will use this bound to upper bound the entropy of the
$\{\lambda_i^2\}$.  Choose $\ell_0= \lceil \frac{\log \mu^2}{\log
\Delta} \rceil$ so that $p_{\ell_0}< 1$.  For $\ell \ge 2 \ell_0
+1$, we we will upper bound the entropy of the distribution with the
bounds
\begin{align*}
  \sum_{j=D^{2\ell}+1}^{D^{2(\ell+1)}} \lambda _j^2 
    \le p_{2\ell}\le  p_\ell \ .
\end{align*}
This is maximized by spreading out the probability mass uniformly in
each interval $[D^{2\ell} + 1, D^{2(\ell+1)}]$ which results in an
upper bound on the contribution of the entropy on the interval $[
D^{2\ell}+ 1, D^{2(\ell+1)}]$ of
\begin{align}
\nonumber
  &p_\ell \log \frac{D^{2(\ell+1)}-D^{2\ell} }{p_\ell} 
    \le p_\ell \log \frac{D^{2(\ell+1)}}{p_\ell}\\
\nonumber
    &\le \Delta^{\ell-\ell_0} \log \frac{
    D^{2(\ell+1)}}{\Delta^{\ell-\ell_0}} \\
    &=\Delta^{\ell-\ell_0}(\ell -\ell_0)
      \log\frac{ D^{2(\ell+1)/(\ell-\ell_0)}}{\Delta} \ .
\label{e:1}
\end{align}
Above, the second inequality follows from $p_{\ell _0} <1$ along
with the fact that the function $x \log \frac{D^{2\ell +1}}{x}$ is
increasing for $x \leq 1$.
 
Summing \Eq{e:1} over $\ell\geq2 \ell_0 +1$ then yields an upper
bound of the entropy contribution of this tail as
\begin{align*}
  &\sum_{\ell \geq 2 \ell_0+1}\Delta^{\ell-\ell_0} 
    (\ell-\ell_0)\log \frac{ D^{2(\ell+1)/(\ell-\ell_0)}}{\Delta}\\
  &\le \sum_{\ell' \geq \ell_0 +1} 
      \ell' \Delta^{\ell'} \log (\frac{D^4}{\Delta} )
    \le \frac{\Delta}{(1-\Delta)^2} \log (\frac{D^4}{\Delta}) \ .
\end{align*}
Above, the first inequality follows from first noting that for the
choice of $\ell$, the exponent on the power of $D$ is upper bounded
by $4$, then making the substitution of $\ell' = \ell - \ell_0$. 
The second follows from using the series equality $\sum _{j\ge
1}jr^j=\frac{r}{(1-r)^2}$.

Combining this estimate with the maximal contribution of entropy
over the first $D^{2(\ell_0 +1)}$ terms, namely $(2\ell_0 +1) \log
D=\orderof{1}\cdot\frac{ \log (\mu ^{-1})}{\log \frac{1}{\Delta}}
\log D$, gives the following bound on the entropy:
\begin{align*}
  S&\le \orderof{1}\cdot
  \frac{\log(\mu^{-1})}{\log(\Delta^{-1})}\log D
    + \frac{\Delta}{(1-\Delta)^2}\log (\frac{D^4}{\Delta}) \ .
\end{align*}  

To simplify this expression, note that for any integer $k\ge 1$,
$K^k$ is an AGSP with characteristic factors $(D^k, \Delta^k)$.
Choosing $k= \lceil \frac{1}{\log(\Delta^{-1})} \rceil$ ensures that
$\frac{1}{4}\le\Delta^k \leq \frac{1}{2}$.  Then substituting the
parameters $D^k$ for $D$ and $\Delta ^k$ for $\Delta$ in the above
expression yields 
\begin{align*}
  S &\le \orderof{1}\cdot\big[\log(\mu^{-1})
    \log D^k + \log{D^{4k}} + 1\big]\\
  &\le\orderof{1}\cdot k \log(\mu^{-1})\log D\\
  &=\orderof{1}\frac{\log(\mu^{-1})}{\log(\Delta^{-1})}
   \log D \ .
\end{align*}

\end{proof}

\bibliography{QC}

\end{document}